\def\be{\begin{eqnarray}}
\def\ee{\end{eqnarray}}
\newcommand*{\email}[1]{\href{mailto:#1}{\nolinkurl{#1}} }
\begin{document}

\title{Fundamental theorem for quantum asset pricing}

\author{
Jinge Bao
\thanks{Centre for Quantum Technologies, National University of Singapore, Singapore 117543, \email{jbao@u.nus.edu}}
\and 
Patrick Rebentrost
\thanks{Centre for Quantum Technologies, National University of Singapore, Singapore 117543, \email{cqtfpr@nus.edu.sg}}
}

\date{\today}

\maketitle

\begin{abstract}
Quantum computers have the potential to provide an advantage for financial pricing problems by the use of quantum estimation. In a broader context, it is reasonable to ask about situations where the market and the assets traded on the market themselves have quantum properties.
In this work, we consider a financial setting where instead of by classical probabilities the market is described by a pure quantum state or, more generally, a quantum density operator. This setting naturally leads to a new asset class, which we call quantum assets. Under the assumption that such assets have a price and can be traded, we develop an extended definition of arbitrage to quantify gains without the corresponding risk. Our main result is a quantum version of the first fundamental theorem of asset pricing. If and only if there is no arbitrage, there exists a risk-free density operator under which all assets are martingales. This density operator is used for the pricing of quantum derivatives. To prove the theorem, we study the density operator version of the Radon-Nikodym measure change. We provide examples to illustrate the theory.
\end{abstract}

\section{Introduction}

The growth of capital markets has driven academic pursuits to describe these markets and to solve problems such as asset pricing \cite{Black1973,Merton1973} and portfolio optimization \cite{MR0103768,MR3235228}. 
Bachelier in his 1900 Ph.D. thesis was arguably the first to use random walks for analyzing speculation and financial options \cite{bachelier1900theorie,Davis2006Book}. In 1905, Einstein published his seminal work on diffusion which includes a description of the Brownian motion \cite{einstein1905molekularkinetischen}, which was rigorously defined by Wiener in 1923 \cite{wiener1923differential}.
Bachelier's ideas influenced later economists such as Samuelson \cite{samuelson1973mathematics,samuelson2009enjoyable,samuelson2015rational}.
The interplay of measure/probability theory and economics lead to the seminal Black-Scholes-Merton framework for fairly pricing European options under specific model assumptions \cite{Black1973,Merton1973}. 
By now there exists a substantial mathematical framework for arbitrage theory and pricing theory \cite{follmer2004stochastic}.
Finance and quantum technologies have recently been investigated in terms of quantum advantages that could arise from the use of quantum computers. 
Reviews are given in Refs.~\cite{orus2019quantum,bouland2020prospects,egger2020quantum, herman2022survey}. Problems considered are in portfolio optimization~\cite{barkoutsos2020improving,dasgupta2019quantum,hodson2019portfolio,rebentrost2018quantum,alcazar2020classical}, risk management~\cite{Woerner2018,alcazar2022quantum,han2022quantum}, and option pricing~\cite{Rebentrost2018finance,Martin2019,vazquez2021efficient,stamatopoulos2021towards,an2021quantum}.
The advent of quantum technologies, related to both computation and communication, entering financial markets motivates the question of how these technologies can transform the markets in fundamental ways.

In this work, we present a quantized study of finance where the market and assets themselves have quantum properties. We first propose a scenario where the financial market is described by a pure quantum state or a density operator instead of a probability vector and provide justification for such a scenario. We show that such a scenario naturally implies a definition of ``quantum assets'', which are introduced here. We study arbitrage in a mixed scenario when both classical and quantum assets are available. This discussion culminates in the first fundamental theorem of pricing quantum assets, which we prove here in analogy to the classical version. We define quantum derivatives in our context and show how the risk-neutral density operators can be used for their pricing. We study the quantum analog of the Radon-Nikodym measure change in the quantum density matrix picture. We consider two-level systems as examples to illustrate arbitrage and measure change.

\section{Quantum finance} \label{secQF}

The main starting point in this work is the assumption that the financial market is described by a quantum state or density operator in a Hilbert space, in contrast to a probability measure on a set of events.
We provide a potential scenario to justify this assumption next.
Then, in the following, we define the setting of this work in more detail. We note that for the classical analogue we heavily lean on Ref.~\cite{follmer2004stochastic}, see Appendix \ref{secPrelim}. 

\begin{figure}[t]
    \centering
\begin{centering}
\includegraphics[align=c,width=0.4\columnwidth]{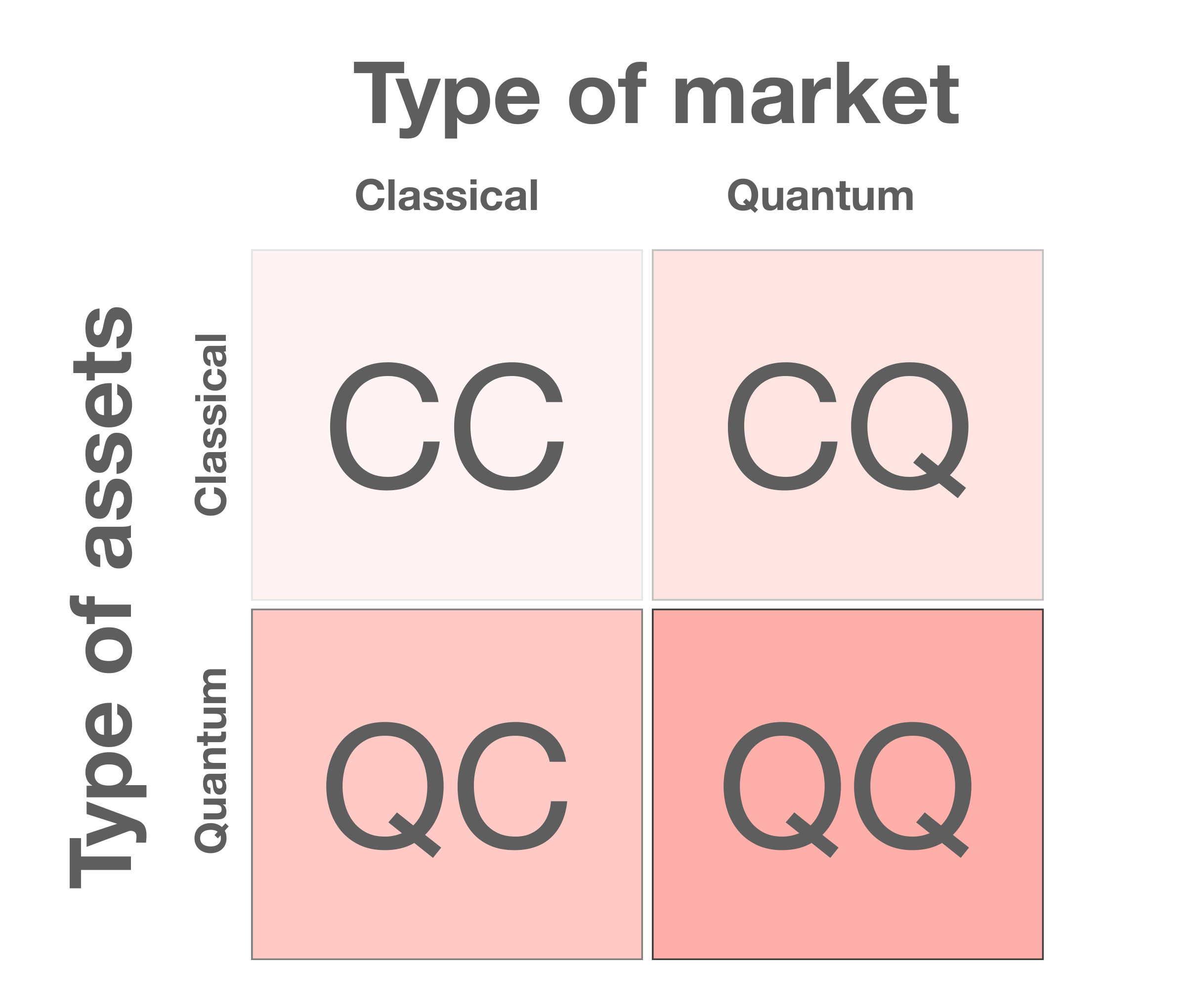}
\includegraphics[align=c,width=0.4\columnwidth]{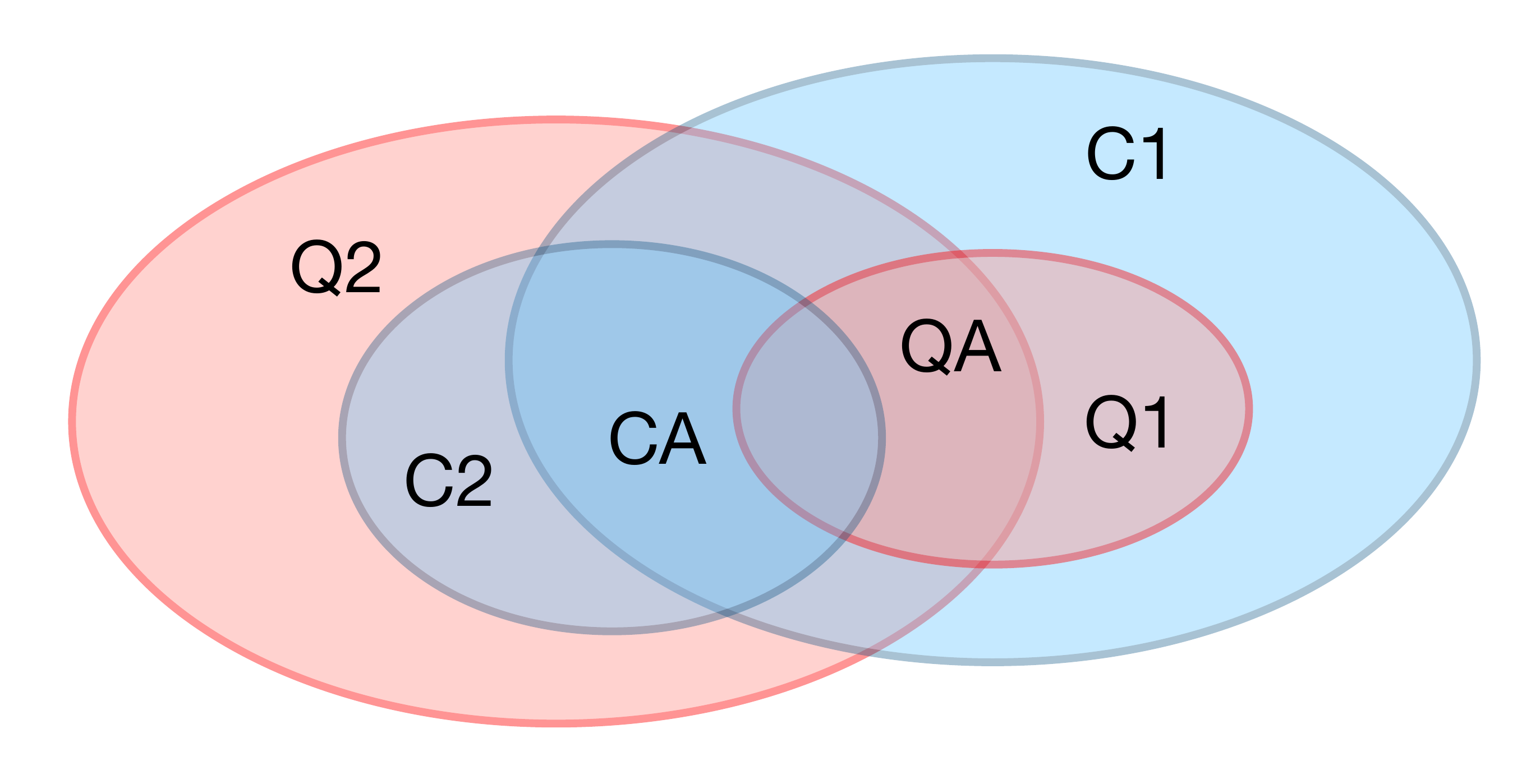}
\end{centering}
\caption{(Left panel) Different directions of quantum finance. Here we consider two dimensions: the type of market, which here is encapsulated by classical probabilities or quantum amplitudes, respectively, and the type of available and constructible assets. This figure is inspired by \cite{enwiki:1140135669,aimeur2006machine,dunjko2016quantum}, and we omit a third dimension to describe the type of information processing devices/algorithms (classical and quantum) used for solving financial problems. (Right panel) Illustration of the effect of the set of allowable states on the definition of quantum arbitrage, visualizing the direction ${\rm QC}\to {\rm QQ}$. In this figure, the classical scenario is the specialization of Def.~\ref{def:x_arbitrage_opportunity} to the case of only classical events, $\hat{\mathcal H} = \Omega$, while the quantum scenario is $\hat{\mathcal H} = \mathcal H^\Omega$.
The set of classical arbitrage portfolios $CA$ is the intersection of $C1$ and $C2$. The set of quantum arbitrage portfolios $QA$ is the intersection of $Q1$ and $Q2$, see Appendix \ref{appArb}.}
    \label{figQF}
    \label{figArb}
\end{figure}

\begin{figure}[t]
    \centering
\includegraphics[width=1.0\textwidth]{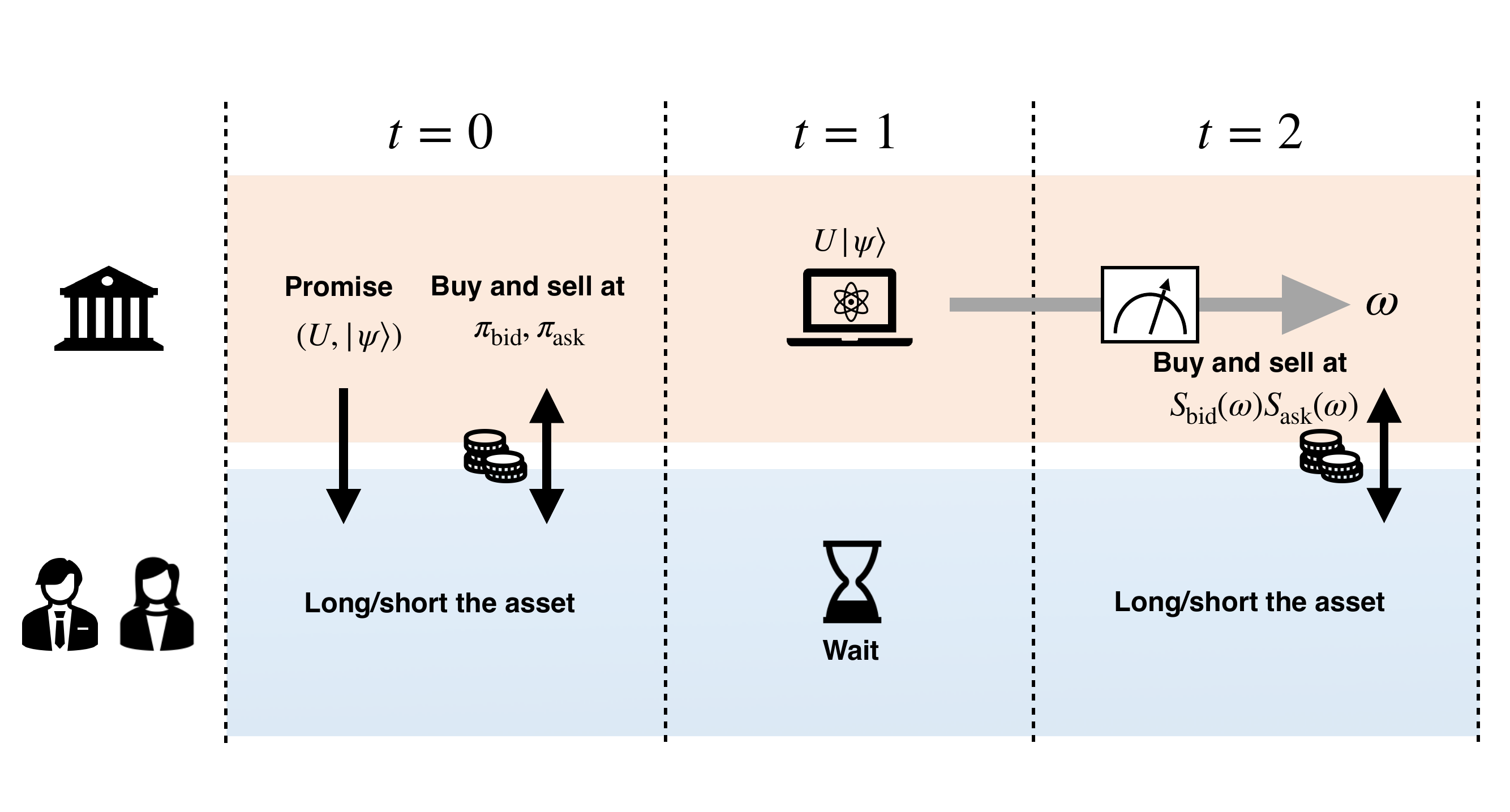}
\caption{A toy scenario. The top panels consider the Market Maker (MM) and the bottom panels consider an investor. The black arrows show interactions between MM and the investor. At $t=0$, the MM announces the unitary $U$, quantum state $\ket \psi$, initial price $\pi$ and functions $S_{bid}$, $S_{ask}: \Omega \mapsto \mathbbm R_{+}$. At $t=1$, the MM prepares the quantum state  $U\ket \psi$. At $t=2$, the state is measured and the MM announces the ask/bid price based on the measurement outcome $\omega$.}
    \label{figQScenario}
\end{figure}

\paragraph{Toy scenario.}
Here, we consider a scenario that would justify the quantum state description of a financial market. Consider a large financial institution with the financial might to significantly influence asset prices or create a liquid market for them. We call this institution a \textit{market maker (MM)}. 
Market-making \cite{radcliffe1990investment} is an important part of finance and means that buy/sell prices are quoted for some asset (say, a commodity, a stock, or a complex financial derivative) by the market maker or liquidity provider.
The market maker hopes to make a profit on the difference between the \textit{bid price} and the \textit{ask price}, which is known as the  \textit{bid-ask spread}.

Consider a market maker with access to a quantum computer. Consider also a benchmark asset that can be traded today $t=0$ at current prices $\pi_{\rm bid}$ and 
$\pi_{\rm ask}$.
A possible scenario would consist of three timesteps as follows.
\begin{itemize}
\item
At $t=0$, the market maker announces that they will, at $t=2$, set their bid price ($S_{\rm bid}$) and ask price ($S_{\rm ask}$) of the benchmark asset. At $t=0$, the MM also announces an $n$-qubit initial state $\ket \psi$ and a unitary $U \in \mathbbm C^{N\times N}$, with $N=2^n$. The MM further guarantees that both prices will be determined by the output of the quantum computation related to $(\ket \psi,U)$, for example, a measurement in the computational basis. The price determination will be via the known functions $S_{\rm bid}(\omega)$ and $S_{\rm ask}(\omega)$, where the measurement result is denoted by $\omega \in \{0,1\}^n$, where the probability of obtaining the result $\omega$ is $\vert \bra \omega U \ket \psi \vert^2$. 
\item 
After the announcement, at the time $t=1$, the MM sets up the quantum computation and  generates the quantum state $ U \ket \psi$. At this time the quantum state has not been measured yet.
\item At the next time $t=2$, the MM then performs the measurement of the quantum state in terms of the computational basis and acts as promised before.
\end{itemize}
Importantly, at the time $t=1$, the probabilities of the measurement are described by the quantum state, which can be in a superposition of the measurement outcomes. 

Now, consider the perspective of an investor in such a scenario. The investor receives the promise from the MM at $t=0$ and the knowledge of the scenario including ($\ket \psi, U$) and the functions $S_{\rm bid}(\omega)$ and $S_{\rm ask}(\omega)$ for $t=2$.
The investor has the chance to invest (buy/sell) in the benchmark asset at today's prices $\pi_{\rm bid}$ and $\pi_{\rm ask}$.
From the investor's point of view, at time $t=1$ the probabilities of the market for the benchmark asset will be encoded in the state $\rho = U\ket \psi \bra \psi U^\dagger$. 
The investor has to reconsider a host of fundamental financial questions. 
One question is about new asset classes that depend on the benchmark asset and that 
can take advantage of the quantum state at time $t=1$ or allow to hedge risks in this scenario.
Another question is about the expected value of a portfolio that includes such assets viewed at time $t=0$. 
Finally, the investor may reason about the issue of arbitrage, i.e., the possibility of gains without taking corresponding risks, and the pricing of hitherto unpriced assets. 
In the remainder of this work, we assume that  bid and ask prices for the assets traded in the market are the same for simplicity.

While the preceding discussion illustrates a potential toy scenario, we now define the setting in more detail.

\paragraph{Sample space and Hilbert space.}
In probability theory, a finite sample space is given by a set $\Omega=\{\omega_1,\cdots,\omega_K\}$ of primary events.
In the quantum case, we associate the classical finite sample space $\Omega$ with the basis of a Hilbert space.  
Given $\Omega$, each $\omega_i \in \Omega$ is associated with a basis vector $\ket {\omega_i}$, and the basis vectors form an orthonormal basis.
A Hilbert space is defined by $\mathcal H^\Omega:={\rm span}\{\ket {\omega_1},\dots,\ket {\omega_K}\}$. A vector $\ket \psi \in \mathcal H^\Omega$ has the representation $\ket \psi = \sum_{i=1}^K \alpha_i \ket {\omega_i}$ with complex coefficients $\alpha_i=\braket{\omega_i}{\psi}$.
In general, we do not necessarily require this association with the classical $\Omega$ and can start from the Hilbert space immediately, in which case $\Omega$ can be thought of as some basis of $\mathcal H$, such as the computational basis of a register of qubits. For the arbitrage discussion,  we define a subset $\hat{\mathcal H }\subseteq \mathcal H^\Omega$ of allowable states, see Section \ref{secArb}.

\paragraph{Market density operator.}
A probability measure is a measure (on a $\sigma$-algebra of $\Omega$) with positivity and normalization constraints and is usually denoted by $\mathbbm P$. 
The finite setting implies for $\mathbbm P$ to be the probability vector $\qvec p \in \mathbbm R^K$,  where $p_i = \mathbbm P(w_i)$ for all $w_i \in \Omega$ and $i \in [K]$. 
We generalize the classical probability vector to a density operator $\rho \in \mathbbm C^{K \times K}$. A density operator is a positive semi-definite self-adjoint operator with unit trace. 
We define the set of density operators in $\mathcal H^\Omega$ to be $\mathcal D(\mathcal H^\Omega)$.
In general, the density operator takes into account quantum correlations between events in the sample space.    
A density operator is a correct description if there is some process that leads to quantum superpositions of events in the time step $t=1$. As an extremely simplified case, consider the situation where a market maker has a single quantum coin which at $t=1$ is prepared in the state $\ket \psi = \frac{1}{\sqrt{2}}\left( \ket 0 + \ket 1 \right)$, a uniform superposition of the two outcomes $\ket 0$ and $\ket 1$. 
The corresponding density matrix is the rank $1$ matrix $\rho = \ket \psi \bra \psi$ of all entries being $1/2$ in the computational basis.
Let the time $t=1$ be before the quantum coin is measured. Despite the coin not being measured at $t=1$, an investor may want to reason about the properties of a portfolio and the value of a portfolio at time $t=1$ viewed from the time $t=0$. 

A technical concept appearing not  only in financial theory is the change from one probability measure to another probability measure, called Radon-Nikodym measure change. As it will also be important for the arbitrage discussion in our context, Appendix \ref{secRN} discusses a version of the Radon-Nikodym derivative for density operators. 
We comment on the nullspace of $\rho$. In the finite classical setting, it is often assumed that all the classical events $\omega \in \Omega$ have strictly positive probability. In the quantum setting, we allow the dimension of the nullspace of $\rho$ to be greater than zero. There can exist $\ket \psi \in \mathcal H^\Omega$ for which $\rho \ket \psi = 0$. We denote the projector into the nullspace as $\Pi_{\rm null}(\rho)$ and the projector into the positive-eigenvalue space  as $\Pi_{+}(\rho) = \mathbbm 1 - \Pi_{\rm null}(\rho)$. 

\paragraph{Quantum assets.}
If we accept the assumption that a quantum state or density operator describes the underlying market probabilities, it is reasonable to ask about new asset classes that may arise from this assumption. We here define an asset class which is a generalization of the classical asset class. A possible name is ``Quantum Hamiltonian Asset'', and for the purpose of this work, it is  called Quantum Asset. 
\begin{definition}[Quantum Asset]
\label{def:quantum_asset}
Given a Hilbert space $\mathcal H$, with ${\rm dim} \mathcal H=K$,
a \textit{quantum asset} is a positive semidefinite Hermitian matrix $\qmat S \in \mathbbm C^{K \times K}$. For $\ket \psi \in \mathcal H$, the future payoff is $\bra \psi \qmat S \ket \psi \geq 0$. The asset can be bought and sold for some price today, where the price is denoted by $\pi > 0$.
\end{definition}
The assets have a price for which they can be bought and sold, and in that sense, the value of these assets is defined. 
This definition leaves questions about the existence/validity of such assets and their intrinsic value for future work. 
Given a classical asset $\qvec S_{\rm cl} \in \RR^K$, its quantum asset embedding is given by the diagonal matrix $\qmat S = {\rm diag}(\qvec S_{\rm cl}) \in \RR^{K \times K}$. 
A quantum asset is diagonal in its eigenbasis, which motivates a financial interpretation. Each eigenstate can be considered a natural event for the quantum asset, in contrast to the primary events in $\Omega$. Each eigenvalue is the outcome or payoff of this asset when the corresponding event happens.

As a simple example of the $K=2$ case, consider a quantum asset $\qmat S$ in the Pauli basis as $\qmat S = a \qmat I + b \qmat \sigma_x + c \qmat \sigma_y + d \qmat \sigma_z$, with real coefficients $a$, $b$, $c$, and $d$. If we let $b=c=d=0$ and $1+r =a>0$, $\qmat S$ becomes a \textit{classical risk-free} asset with rate $r$, whose payoff is always $1+r$. When $a>d>0$ and $b=c=0$, $\qmat S$ is a \textit{classical risky} asset whose payoff is either $a+d$ or $a-d$ that is determined by the sign and amount of $d$. When $b$ or/and $c$ are nonzero, $\qmat S$ can be thought of as a genuine quantum asset. 

\paragraph{Expected value of quantum assets.}
Given a density operator $\rho$ and a single quantum asset $\qmat S$, we define the expected value of the asset under $\rho$ at time $t=1$ viewed at time $t=0$ as
\be
\mathbbm E^{\rho}[\qmat S]:={\rm tr} \left\{\rho \qmat S \right \}.
\ee
Let $\qmat S$ have the spectral decomposition $\qmat S=U\Lambda U^\dag$. Thus, $\mathbbm E^{\rho}[\qmat S]={\rm tr} \left\{\rho \qmat S\right\} = {\rm tr}\left\{\rho U \Lambda U^\dag\right\}={\rm tr}\left\{U^\dag \rho U \Lambda\right\}=\mathbbm E^{\rho'}[\Lambda]$, where we define $\rho':=U^\dag \rho U$. Here,  $\rho'$ is also a density operator, and the quantum asset $\qmat S$ is changed to a diagonal asset. In other words, the quantum asset can be regarded as a classical asset under a certain density operator. However, simultaneous diagonalization of multiple quantum assets is usually not possible.

Consider a simple example where $\rho  = \frac{1}{2} (\mathbbm 1 + \Delta \sigma_x)$, where $\Delta \in[-1,1]$ is the amount of coherence in the density matrix. In addition, consider the quantum asset $\qmat S = (\mathbbm 1 + q \sigma_x)$, where $q \in[-1,1]$ can be seen as the amount of ``quantum-ness'' in the asset. A unit amount of the asset will have the expected value 
$\mathbbm E^\rho[\qmat S] = 1 + q\Delta$. Depending on the signs of $\Delta$ and $q$, we obtain a lower or higher expectation value than the corresponding classical value of $1$ at $\Delta =0$ or $q=0$.

\paragraph{Main setting.}
The main setting in this work is a single-period model of today ($t=0$) and tomorrow ($t=1$). We can invest in one risk-free asset with the fixed interest rate $r$ and today's price of $\pi_0=1$, and $d$ quantum assets or embedded classical assets. We denote the collection of assets (tomorrow's payoffs) as $\overline{\qten S}=(\qmat S_0,\qmat S_1,...,\qmat S_d)$, which formally can be regarded as a $(d+1) \times K \times K$ tensor, while conceptually all assets live in the same Hilbert space. Together with the price vector $\overline {\qvec \pi} \in \mathbbm R_+^{d+1}$, we call the tuple $(\overline {\qvec \pi}, \overline{\qten S})$ a \textit{price system}.
The definition of a portfolio $\overline {\qvec \xi} \in \mathbbm R^{d+1}$ is a vector of the amount of assets being held by an investor, the same as in the classical setting. This setting allows for negative entries in the portfolio which corresponds to taking out a loan for the risk-free asset or short-selling a risky asset, respectively.
Given a portfolio $\overline {\qvec \xi}=(\xi^0,\qvec \xi) \in \mathbbm R^{d+1}$, the expected value the portfolio is given by
\be
\mathbbm E^\rho[\overline {\qvec \xi} \cdot \overline {\qten S}]
= {\rm tr}\left \{\rho\  \overline {\qvec \xi} \cdot \overline {\qten S}\right\}.
\ee
With the main setting, the quantum probabilities, and the quantum assets being defined, we can now discuss arbitrage and risk-neutral density operators.

\section{Arbitrage theory for quantum assets} \label{secArb}

The next fundamental question is the relationship between quantum assets and arbitrage opportunities.
In analogy to the classical definition from \cite{follmer2004stochastic}, we give our definition of a \textit{quantum arbitrage opportunity}. 
To generalize the situation, we also allow the use of a subset $\hat{\mathcal H }\subseteq \mathcal H^\Omega$ of allowable states.
First, due to constraints on the scenario, some sets of states may never appear, and we would like to have the freedom to include these situations. Note that we cannot model this situation with the null space of the density operator. 
Second, if we hold the price system and market density operator constant, we can consider the classical limit of this definition of arbitrage if we take $\hat{\mathcal H }=\Omega$.
The following definition only depends on the allowable  states in $\hat{ \mathcal H}$ and the nullspace of the density matrix $\rho$.
\begin{definition}[Quantum arbitrage opportunity]
\label{def:x_arbitrage_opportunity}
Let $(\overline {\qvec \pi}, \overline{\qten S})$ be a price system, $\hat {\mathcal H}\subseteq \mathcal H^\Omega$ be a set of quantum states and $\rho \in \mathcal D(\mathcal H^\Omega)$ be the market density operator. 
A portfolio $\overline {\qvec \xi} \in \RR^{d+1}$ is called a \textit{quantum arbitrage opportunity} if
todays value is $ \overline {\qvec \xi} \cdot \overline {\qvec \pi}\leq 0 $, and
for all $\ket \psi \in  \hat{\mathcal H}$ with $\bra \psi \rho \ket \psi > 0$, it holds that $\bra \psi \overline {\qvec \xi} \cdot \overline {\qten S} \ket \psi \geq 0 $ and
there exists at least one $\ket \psi \in \hat{\mathcal H}$ with $\bra \psi \rho \ket \psi > 0$ and $ \bra \psi \overline {\qvec \xi} \cdot \overline {\qten S}) \ket \psi > 0 $. 
\end{definition}
This definition specializes to the definition of classical arbitrage (Definition \ref{def:arbitrage_opportunity}) when all the assets are classical (diagonal) and the states are only chosen from $\ket \psi \in \Omega$.
In Figure \ref{figArb} and Appendix \ref{appArb}, we illustrate the relationship between classical limit $\hat{\mathcal H }=\Omega$ and quantum arbitrage $\hat{\mathcal H }=\mathcal H^\Omega$, when price system and density operator are fixed. We see an interesting separation of the different regimes, which allows a variety of scenarios of classical and quantum arbitrage. 

An arbitrage opportunity can be characterized in terms of only the risky assets and the market rate.
Lemma \ref{lem:x_arbitrage_risky_asset}, allows relating arbitrage to returns beyond the market rate $r$ from the riskless asset. It is shown that 
the market model admits a quantum arbitrage opportunity if and only if there is a vector $\qvec \xi \in \mathbbm R^d$ such that
for all $\ket \psi \in \hat{ \mathcal H}$ with $\bra \psi \rho \ket \psi > 0$, we have that $ \bra \psi \qvec \xi \cdot \qten S \ket \psi \geq (1+r) \qvec \xi \cdot \qvec \pi$ and 
there exists at least one $\ket \psi \in \hat{ \mathcal H}$ with $\bra \psi \rho \ket \psi > 0$, such that $\bra \psi \qvec \xi \cdot \qten S \ket \psi > (1+r) \qvec \xi \cdot \qvec \pi.$
See Lemma \ref{lem:x_arbitrage_risky_asset} for the proof.
Based on this lemma it is convenient to define $\qten Y=(\qmat Y_1,\dots,\qmat Y_d)$ where $\qmat Y_i \coloneqq \frac{\qmat S_i}{1+r}-\pi_i \mathbbm I$ as the \textit{discounted net gains}. 
It is also important to investigate those market models which do \textit{not} admit any arbitrage opportunity. The models with this property are named \textit{quantum arbitrage-free market models}. 
In addition, Corollary \ref{cor:x_arbitrage_discounted_net_gains} shows that the absence of arbitrage is equivalent to the property that risk-free non-negative net gains have to be zero net gains.
 
We illustrate the Definition \ref{def:x_arbitrage_opportunity} and Lemma \ref{lem:x_arbitrage_risky_asset} with simple examples. We show that quantum effects can either turn an arbitrage-free model into an arbitrage model, or turn an arbitrage model into an arbitrage-free model, or do not affect an arbitrage-free model.

Consider the two base events $\{ \omega_1, \omega_2\}$ and the density operator $\rho  = \frac{1}{2} (\mathbbm 1 + \Delta \sigma_x)$, where $\Delta \in[-1,1]$. 
Both base events are not in the null-space of $\rho$, since $\bra{\omega_1}\rho \ket {\omega_1}=\bra{\omega_2}\rho \ket {\omega_2}=1/2$.
Let us be given the risk-free asset, with interest rate $r$, and 
the asset $\qmat S = \frac{a+b}{2} \mathbbm 1 + \frac{a-b}{2}\sigma_z + q \sigma_x$, with today's price $\pi> 0$  and $0<a <b$ and $q\in [-\sqrt{ab},\sqrt {ab}]$.
The range for $q$ comes from the semi-positivity constraint for the asset.
The payoffs for the classical events evaluate as 
$\bra{\omega_1} \qmat S \ket {\omega_1} = a$ and $\bra{\omega_2} \qmat S \ket {\omega_2} = b$.
If $a <  \pi (1+r) < b$, then
$\bra{\omega_1} \qmat S \ket {\omega_1} < (1+r)\pi$ and $\bra{\omega_2} \qmat S \ket {\omega_2} > (1+r)\pi$.
We find whether there is arbitrage on the classical events by Lemma \ref{lem:x_arbitrage_risky_asset}.
Since for all $\xi \in \mathbbm R$ the portfolio $\xi \qmat S $ has the possibility of a payoff smaller than $(1+r) \pi$, the setting is classically arbitrage-free. The classical result is independent of the choice of $\Delta$ and $q$.
Somewhat surprisingly, the setting is also quantumly arbitrage-free, because we have the event $\bra{\omega_1} \qmat S \ket {\omega_1}<(1+r) \pi$, which gives the chance to lose money compared to the risk-less investment. 

Consider the asset $\mathcal S$ as in the previous example specialized for the case of today's price $\pi=1$ and $a=b=1+r$. Evaluate the classical events
again as 
$\bra{\omega_1} \qmat S \ket {\omega_1} = 1+r$
and $\bra{\omega_2} \qmat S \ket {\omega_2} = 1+r$.
Hence, there is again no classical arbitrage.
However, let $\ket \psi = \psi_+ \ket + + \psi_- \ket -$, with $\psi_+,\psi_-\in \mathbbm C$ such that $\vert \psi_+\vert^2 +\vert \psi_-\vert^2 =1$, and consider the set of states 
$\hat{ \mathcal H} := \{ \ket \psi : \vert \psi_+\vert^2 \geq \vert \psi_-\vert^2\}$.
We have
$\bra{\psi} \qmat S \ket {\psi}= 1+r + (\vert \psi_+\vert^2- \vert \psi_-\vert^2) q.$
For $q>0$, we obtain arbitrage, because when $\vert \psi_+\vert^2 > \vert \psi_-\vert^2$ we have strictly positive payoffs and there are never strictly negative payoffs.

Consider again the asset $\mathcal S$  specialized for the case of $\pi=1$ and $1+r< a < b$. We now have arbitrage on the classical events since $1+r <\bra{\omega_1} \qmat S \ket {\omega_1} <\bra{\omega_2} \qmat S \ket {\omega_2}$ and the risky asset always returns more than the risk-free rate. 
In the quantum setting, let the state space 
be $\hat{ \mathcal H} = \mathcal H^\Omega$. 
Note that $\mathcal S - (1+r)\mathbbm 1$ is a positive semi-definite matrix if $q \in [-q_0,q_0]$, where $q_0 = \sqrt{(a-1-r)(b-1-r)}$.
Hence, 
for all $q\in [-\sqrt{ab}, -q_0] \cup [q_0,\sqrt {ab}]$, we obtain some $\ket \psi$ for which $\bra{\psi} \qmat S \ket {\psi} < 1+r$.
Hence, we have the case of no quantum arbitrage, as there are quantum events $\ket \psi$ such that the asset has a chance for a loss compared to the risk-free asset.
The use of $\hat{\mathcal H}$ gave us a more general definition of quantum arbitrage and hence more interesting cases. However, for the remainder of this work, we consider $\hat{\mathcal H} = \mathcal H^\Omega$.

In analogy to the classical case, it makes sense to define a \textit{risk-neutral density operator} or \textit{martingale density operator}, which is a natural generalization of the risk-neutral probability measure.
Under such a density operator the discounted expected value of a quantum asset is exactly its price today.
Such density operators will be the content of the first fundamental theorem of the pricing of quantum assets.
A density operator $\rho^\ast \in \mathbbm C^{K \times K}$ of $\Omega$ is called a \textit{risk-neutral density operator}, or a \textit{martingale density operator}, if
\be \label{eqMartingaleDensityOperator}
\pi_i = {\rm tr} \left \{ \rho^\ast \frac{\qmat S_i}{1+r}\right\},
\ee
for all $i \in [d]_0$.
A natural question arises about changing the original density operator $\rho$ into the risk-neutral density operator $\rho^\ast$, which is the motivation for Appendix \ref{secRN}. 

We end this section with the following example illustrating Eq.~(\ref{eqMartingaleDensityOperator}).
Consider the asset $\qmat S = \frac{a+b}{2} \mathbbm 1 + \frac{a-b}{2}\sigma_z + q \sigma_x$ with price $\pi$, where $0 < a < \pi (1+r) < b$ and $q \in[-\sqrt {ab},\sqrt {ab}]$.
Consider a market density operator $\rho^\ast \coloneqq p \ket 0 \bra 0 + (1- p) \ket 1 \bra 1 + \Delta \sigma_x$,
with $p = (b-(1+r)\pi)/(b-a)$ and $\Delta \in[-\sqrt {p(1-p)},\sqrt{p(1-p)}]$.
Checking the risk-neutrality property amounts to
$
\mathbbm E^{\rho^\ast}\left [\frac{\qmat S}{1+r}\right] = \pi + \frac{2 q\Delta}{1+r}
$.
If $q=0$ or $\Delta = 0$, we have the desired property. In other cases, this example suggests that the market prices for the quantum asset are not arbitrage-free and should be modified to 
$\pi + \frac{2q\Delta}{1+r}$.

\section{First fundamental theorem and quantum derivatives} \label{secFT}

In this section, we present the result of the existence of risk-neutral density operators. The result and proof are analogous to the classical case presented in \cite{follmer2004stochastic}.
The risk-neutral density operator from Eq.~(\ref{eqMartingaleDensityOperator}) may not be unique. The set of risk-neutral density operators which are equivalent to $\rho$ is denoted by $\mathcal P$, i.e., 
$\mathcal P := \{\rho^\ast|\rho^\ast \textrm{ is a risk-neutral density operator for which } \rho^\ast \approx \rho \}.$
A version of the Radon-Nikodym derivative, discussed in Appendix \ref{secRN}, allows changing between density operators.
Our main theorem characterizes the arbitrage-free model in terms of the set $\mathcal P$.

\begin{theorem} 
[Fundamental theorem of quantum asset pricing (FTQAP)]\label{thmMain}
A market model is \textit{quantum arbitrage-free} if and only if $\mathcal P \neq \emptyset$. In this case, there exists a $\rho^\ast \in \mathcal P$.
\end{theorem}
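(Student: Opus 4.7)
The plan is to establish both directions of the equivalence by adapting the classical FTAP argument of \cite{follmer2004stochastic}, replacing probability vectors with density operators and scalar payoffs with Hermitian operators. A key reduction is supplied by Lemma \ref{lem:x_arbitrage_risky_asset} and Corollary \ref{cor:x_arbitrage_discounted_net_gains}, which recast no-arbitrage in operator-theoretic terms: there is no arbitrage iff, for every $\qvec \xi \in \mathbbm R^d$, positive semidefiniteness of $\qvec \xi \cdot \qten Y$ forces $\qvec \xi \cdot \qten Y = 0$. This reframing makes the theorem a statement about the geometry of the subspace spanned by the discounted net gains inside the cone of positive semidefinite operators.

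For the $(\Leftarrow)$ direction I would argue by contradiction. Given $\rho^* \in \mathcal P$, the martingale property yields ${\rm tr}(\rho^* \, \qvec \xi \cdot \qten Y) = 0$ for every $\qvec \xi$. If an arbitrage existed, the corollary would provide $\qvec \xi \cdot \qten Y \succeq 0$ with $\qvec \xi \cdot \qten Y \neq 0$. Vanishing trace of a PSD--PSD product forces $\rho^* \, \qvec \xi \cdot \qten Y = 0$, so $\qvec \xi \cdot \qten Y$ annihilates the support of $\rho^*$. Equivalence $\rho^* \approx \rho$ transports this to the support of $\rho$; combined with the mixing-state construction from Definition \ref{def:x_arbitrage_opportunity} (where a witness state may straddle $\mathrm{Supp}(\rho)$ and its orthogonal complement), this yields $\qvec \xi \cdot \qten Y = 0$, contradicting the assumed arbitrage.

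For the $(\Rightarrow)$ direction I would use a cone-duality argument in the real vector space $\mathrm{Herm}(\mathcal H^\Omega)$ equipped with the Hilbert-Schmidt inner product. Let $\mathcal V := \{\qvec \xi \cdot \qten Y : \qvec \xi \in \mathbbm R^d\}$ and let $\mathcal C$ denote the positive semidefinite cone. By the corollary, no-arbitrage is equivalent to $\mathcal V \cap \mathcal C = \{0\}$. Because $\mathcal C$ is closed, convex, pointed, and self-dual under the Hilbert-Schmidt pairing, with relative interior equal to the strictly positive cone, a Gordan-type theorem of alternatives yields a positive definite $\sigma$ with ${\rm tr}(\sigma A) = 0$ for all $A \in \mathcal V$. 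Setting $\rho^* := \sigma / {\rm tr}(\sigma)$ produces a density operator satisfying ${\rm tr}(\rho^* \qmat S_i / (1+r)) = \pi_i$ for every $i \in [d]_0$, and hence $\rho^*$ is risk-neutral.

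The principal obstacle is maintaining equivalence $\rho^* \approx \rho$ when $\rho$ has a nontrivial nullspace: the cone-separation step naturally delivers a full-rank $\rho^*$, whereas equivalence requires matching supports exactly. I would resolve this by first compressing the problem to $\mathrm{Supp}(\rho)$: replace each $\qmat Y_i$ by $\Pi_+(\rho) \qmat Y_i \Pi_+(\rho)$, apply the full-support version of the cone-duality argument on the reduced Hilbert space to obtain a positive definite $\tilde \rho^*$ there, and then lift via $\rho^* := \tilde \rho^* \oplus 0_{\mathrm{null}}$. Verifying that the compressed market is arbitrage-free iff the original market is would again rely on the mixing-state analysis invoked in the $(\Leftarrow)$ argument, and the Radon-Nikodym construction in Appendix \ref{secRN} would supply the correct quantum analog of equivalence underlying this reduction.
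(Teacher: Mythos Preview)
Your $(\Rightarrow)$ argument takes a genuinely different route from the paper. The paper does not separate in ${\rm Herm}(\mathcal H^\Omega)$; instead it works in $\mathbb{R}^d$. It sets $\mathcal Q := \{\sigma : \sigma \approx \rho\}$, forms the convex set $\mathcal C := \{{\rm tr}\{\sigma \qten Y\} : \sigma \in \mathcal Q\} \subseteq \mathbb{R}^d$ of expected net-gain vectors, and argues that $0 \in \mathcal C$ (which is exactly the statement that some $\sigma \in \mathcal Q$ is risk-neutral). The contradiction step, assuming $0 \notin \mathcal C$, applies the separating-hyperplane theorem in $\mathbb{R}^d$ to produce $\qvec \xi$ with ${\rm tr}\{\sigma\,\qvec\xi\cdot\qten Y\} \geq 0$ for all $\sigma \in \mathcal Q$ and strict inequality for some $\sigma_0$; the bulk of the work is then a limit argument via a sequence $\sigma_n \in \mathcal Q$, built from projectors $\Pi_A$ onto the computational-basis states with $\langle\omega|\qvec\xi\cdot\qten Y|\omega\rangle < 0$, to force $\qvec\xi\cdot\qten Y \succeq 0$ and thereby contradict Corollary~\ref{cor:x_arbitrage_discounted_net_gains}. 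The essential point of contrast is that the paper bakes the equivalence constraint $\sigma \approx \rho$ into the set being separated, so the support-matching problem you identify as the ``principal obstacle'' never arises.

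Your proposed resolution of that obstacle, compressing to ${\rm Supp}(\rho)$, has a genuine gap: the claim that the compressed market is arbitrage-free whenever the original one is does not hold. Take $K=2$, $\rho = |0\rangle\langle 0|$, and a single risky asset with $\qmat Y_1 = |0\rangle\langle 0| + c(|0\rangle\langle 1| + |1\rangle\langle 0|)$ for small $c \neq 0$ (one can choose $\pi_1 > 0$ so that $\qmat S_1 \succeq 0$). Both $\qmat Y_1$ and $-\qmat Y_1$ are indefinite, and their negative eigenvectors have nonzero overlap with $|0\rangle$, hence lie outside ${\rm null}(\rho)$; so the original market is arbitrage-free by Corollary~\ref{cor:x_arbitrage_discounted_net_gains}. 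But the compression $\Pi_+\qmat Y_1\Pi_+$ is the $1\times 1$ matrix $\langle 0|\qmat Y_1|0\rangle = 1$, which is a strict arbitrage on the compressed space, so your Gordan step cannot run there. The mixing-state argument you invoke only shows that a PSD operator vanishing on ${\rm Supp}(\rho)$ must vanish everywhere once tested against straddling states; it does not show the converse direction you need, namely that PSD-ness of $\Pi_+(\qvec\xi\cdot\qten Y)\Pi_+$ on ${\rm Supp}(\rho)$ forces PSD-ness of $\qvec\xi\cdot\qten Y$ on the full space. To make a cone-duality approach work one would have to restrict the PSD cone itself to operators supported on ${\rm Supp}(\rho)$ from the outset, which brings one back to something much closer to the paper's strategy of separating over $\mathcal Q$.
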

We show the proof in Appendix \ref{appProofMain}.
Given the primary quantum assets with arbitrage-free prices, it is natural to ask about other financial assets. A \textit{quantum derivative} here is a quantum asset without a current price. 
A quantum derivative is a positive semidefinite Hermitian matrix $\qmat V \in \mathbbm C^{K \times K}$. For $\ket \psi \in \mathcal H^\Omega$, the future payoff is $\bra \psi \qmat V \ket \psi \geq 0$. 
The derivative does not have a current price.
This definition includes derivatives which are some function of the primary assets. For example, given a primary asset $\mathcal S$ and some function $v : \mathbbm R \to \mathbbm R$, a derivative can be defined as $\mathcal V \coloneqq v(\mathcal S)$, i.e., the matrix where the function is applied to the eigenvalues of the primary asset.
Valuation of quantum derivatives is assigning them an arbitrage-free price. The main result from the fundamental theorem (Theorem \ref{thmMain}) is that in the absence of arbitrage, we obtain a set of risk-free density operators.  Let a density operator be such that
$
\rho^\ast \in \mathcal P.
$
Any density operator in $\mathcal P$ can be used for arbitrage-free price determination.
We are given a quantum derivative $\qmat V \in \mathbbm C^{K \times K}$.
The fair price for the derivative under $\rho^\ast$ is 
\be
\pi_{\qmat V}^{\rho^\ast} \coloneqq {\rm tr} \left \{ \rho^\ast \frac{\qmat V}{1+r} \right \}.
\ee
The price is given by a trace using the risk-free density operator and the derivative. The price can be maximized or minimized over $\mathcal P$ to obtain the possible interval of prices. In cases where the density operator lives in a large-dimensional Hilbert space, the pricing could be inefficient classically. Pricing such derivatives may require the use of quantum computers. 

The final example considers a case where the price of a quantum asset is lower than the analogous purely diagonal asset, and the risk-neutral density operator has quantum properties. This risk-neutral operator is then used to price a simple derivative which is ``half as quantum''. 
Similar to the previous examples, consider the asset $\qmat S = \frac{a+b}{2} \mathbbm 1 + \frac{a-b}{2}\sigma_z + q  \sigma_x$, where $0 < a < b$ and $q \in[0,\sqrt {ab}]$. 
Let $\pi_0$ be such that $b >  (1+r) \pi_0 >a$ and define $p := (b-(1+r)\pi_0)/(b-a)$.
Let today's price of the asset $\mathcal S$ be $\pi := \pi_0 + 2\eta/(1+r) >0$,
where 
$\eta\in [0, q\sqrt{p(1-p)}]$.
Consider the risk neutral density operator $\rho^\ast := p \ket 0 \bra 0 + (1- p) \ket 1 \bra 1 + \frac{\eta}{q} \sigma_x$,
which is a positive semi-definite matrix, under which $\mathbbm E^{\rho^\ast}\left [\frac{\qmat S}{1+r}\right] = \pi$.
Now let a derivative be 
$\qmat V := \frac{a+b}{2} \mathbbm 1 + \frac{a-b}{2}\sigma_z + \frac{q}{2} \sigma_x$. A fair price is $\mathbbm E^{\rho^\ast}\left [\frac{\qmat V}{1+r}\right] = \pi_0+ \frac{\eta}{(1+r)}$.

\section{Discussion}

We have investigated a direction for quantum finance which directly takes into account the potential influence of quantum technologies on financial markets. This direction is different than the direction of using quantum computers of solving classically-defined financial problems, but eventually, these directions could be combined. There have been related investigations before. One of them notices the connection between the Black-Scholes partial differential equation and the imaginary time Schr\"odinger equation. For the BSM model an effective Hamiltonian $H_{\rm BSM}$ can be defined which leads to the same time evolution \cite{haven2002discussion,
baaquie2003quantum}.
Moreover, there is work on generalizing the Wiener process to a pseudo-Wiener process and quantum stochastic calculus \cite{segal1998black,accardi2007quantum,melnyk2008quantum,bhatnagar2022quantum}. This process shows ballistic dynamics (energy conserving) in contrast to the diffusive dynamics of the Wiener process. A pricing formula for European options can be derived.
In a broader sense, quantum effects have been discussed in economics and finance
\cite{orrell2021quantum,orrell2020quantum,orrell2020quantumBook}.
We also note the topic of quantum money \cite{Molina2013}, where the no-cloning property of quantum states is used to define a currency that cannot be copied.
Hence, the topic of quantizing finance has a reasonably long history. 
Our work shares some similarities in spirit with quantum game theory \cite{eisert1999quantum,khan2018quantum}. The strategies the players can play can be quantum states which can lead to different outcomes in prisoner's dilemma games. We note critiques of quantum game theory \cite{khan2018quantum}. 

We stress that our scenario is technologically motivated. This motivation is in contrast to a motivation arising from the microscopic description of the universe or from molecular processes in the brain, which have been discussed before.
With the advent of quantum technologies (both computation and communication) in financial markets, it is not entirely unreasonable that such a situation arises where the market has to be described with a density operator. It could be argued that in the presence of these quantum technologies, the fully-classical scenario is an unstable point that could readily evolve into a point where quantum effects have to be taken into account.

The present setting allows for the introduction of \textit{quantum derivatives} (Section \ref{secFT}). In our setting, such derivatives are quantum assets (Definition \ref{def:quantum_asset}) for which there is no today's price $\pi$. The pricing problem involves finding a fair price from the existing quantum/classical assets using the martingale density operator discussed in the present work. We leave further investigations of quantum derivatives for future work.
We point out other future directions.
The quantum assets here were defined as having a current price and a future price. This definition does not argue about the intrinsic value of the assets which is left for future work.
In this work, we take $2 \times 2$ density operators for the examples to show the simplest generalization of the classical two-outcome probability vector. As the dimension increases, more interesting quantum financial phenomena could be discovered.
In addition, other basic concepts like the redundancy of the market and market completeness are worthwhile directions.
Finally, one may investigate other forms of the Radon-Nikodym quantum measure change with a stronger foundation in superoperator theory.

\section{Acknowledgements}
We acknowledge valuable discussions with Serge Massar, Sergi Ramos-Calderer, Sai Vinjanampathy, Xiufan Li, and Yanglin Hu.
This research is supported by the National Research Foundation, Singapore, and A*STAR under its CQT Bridging Grant and its Quantum Engineering Programme under grant NRF2021-QEP2-02-P05.

\bibliographystyle{unsrt}
\bibliography{references}

\newpage

\appendix
\section{Preliminaries} \label{secPrelim}

\subsection{Mathematical notations}
We denote by $[n]$ the set $\{1,\dots,n\}$ and by $[n]_0$ the set $\{0,1,\dots,n\}$.
We denote vectors with bold letters, e.g., $\qvec s$, matrices by calligraphic letters, e.g., $\qmat S$, and tensors by calligraphic bold letters, e.g., $\qten S$. Given two vectors $\qvec u, \qvec v$ we denote their inner product by $\qvec u \cdot \qvec v$. More general, given a vector $\qvec u \in \mathbbm R^d$ and an array of matrices (tensor) $ \qten S=(\qmat S_1,\qmat S_2,...,\qmat S_d)$, where $S_j \in \mathbbm R^{K \times K}$ for all $j \in [d]$,
we denote the contraction along the first dimension also as $\qvec u \cdot \qten S = \sum_{j=1}^d u_j \qmat S_j$. 

An important result used in classical arbitrage theory is the separating hyperplane theorem. 
\begin{theorem}[Separating hyperplane theorem \cite{follmer2004stochastic}]\label{thm:seperating_hyperplane}
    Suppose that $\mathcal C \subset \mathbbm R^n$ is a nonempty convex set with $\qvec 0 \notin \mathcal C$. Then there exists $\qvec \eta \in \mathbbm R^n$ with $\qvec \eta \cdot \qvec x \geq 0$ for all $\qvec x \in \mathcal C$, and with $\qvec \eta \cdot \qvec x^\ast$ for at least $\qvec x^\ast > 0$ for at least one $\qvec x^\ast \in \mathcal C$.
\end{theorem}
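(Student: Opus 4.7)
The plan is to split into two cases according to whether the origin lies in the topological closure $\overline{\mathcal C}$, which is again a nonempty convex subset of $\mathbbm R^n$. In the easy case $\qvec 0 \notin \overline{\mathcal C}$ the separation will be realized via the nearest point; in the hard case $\qvec 0 \in \overline{\mathcal C}$ I must construct a supporting hyperplane to $\overline{\mathcal C}$ at the origin and then argue the existence of a point of $\mathcal C$ strictly on one side. I expect the last step to be the main obstacle, and I plan to settle it by induction on the dimension $n$.

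For the first case, since $\overline{\mathcal C}$ is closed and $\mathbbm R^n$ is finite-dimensional, intersecting $\overline{\mathcal C}$ with a ball of radius slightly larger than $d(\qvec 0, \overline{\mathcal C})$ gives a compact set on which $\qvec x \mapsto \|\qvec x\|^2$ attains its minimum at some $\qvec p \in \overline{\mathcal C}$ with $\qvec p \neq \qvec 0$. I would then take $\qvec \eta \coloneqq \qvec p$. For every $\qvec x \in \overline{\mathcal C}$ and $t \in [0,1]$, convexity gives $\qvec p + t(\qvec x - \qvec p) \in \overline{\mathcal C}$, so $\|\qvec p + t(\qvec x - \qvec p)\|^2 \geq \|\qvec p\|^2$; expanding, dividing by $t > 0$, and sending $t \downarrow 0$ delivers $\qvec \eta \cdot \qvec x \geq \|\qvec \eta\|^2 > 0$ for all $\qvec x \in \overline{\mathcal C}$, which covers both the inequality on $\mathcal C$ and the strict inequality at any $\qvec x^\ast \in \mathcal C$ simultaneously.

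In the second case, $\qvec 0 \in \overline{\mathcal C}$ but $\qvec 0 \notin \mathcal C$. Since the relative interior of a convex set is contained in the set itself, $\qvec 0$ cannot lie in the relative interior of $\overline{\mathcal C}$, so there is a sequence $(\qvec z_k) \subset \mathbbm R^n \setminus \overline{\mathcal C}$ with $\qvec z_k \to \qvec 0$. Applying the nearest-point construction to each translate $\overline{\mathcal C} - \qvec z_k$ yields unit vectors $\qvec \eta_k$ with $\qvec \eta_k \cdot (\qvec x - \qvec z_k) \geq 0$ for all $\qvec x \in \overline{\mathcal C}$. Compactness of the unit sphere in $\mathbbm R^n$ lets me extract a convergent subsequence $\qvec \eta_{k_j} \to \qvec \eta$ with $\|\qvec \eta\| = 1$, and passing to the limit gives $\qvec \eta \cdot \qvec x \geq 0$ for every $\qvec x \in \overline{\mathcal C}$, in particular on $\mathcal C$.

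The remaining obstacle is producing some $\qvec x^\ast \in \mathcal C$ with strict inequality $\qvec \eta \cdot \qvec x^\ast > 0$. If the $\qvec \eta$ just constructed already achieves this, the proof is complete. Otherwise $\mathcal C$ sits inside the linear hyperplane $H = \{\qvec x : \qvec \eta \cdot \qvec x = 0\}$, a subspace of dimension $n-1$ containing $\qvec 0$. I would close out by induction on $n$: the base case $n = 1$ is immediate because a convex subset of $\mathbbm R$ missing $0$ lies entirely in $(0,\infty)$ or entirely in $(-\infty, 0)$, so $\qvec \eta = \pm 1$ works. For the inductive step, identify $H$ with $\mathbbm R^{n-1}$, apply the inductive hypothesis to $\mathcal C \subset H$ to obtain $\qvec \eta' \in H$ satisfying both conditions inside $H$, and use this $\qvec \eta'$ as the final separating normal in $\mathbbm R^n$.
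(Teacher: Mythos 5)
The paper does not actually prove Theorem \ref{thm:seperating_hyperplane}: it is imported verbatim (with a citation to \cite{follmer2004stochastic}) and used as a black box in the proof of Theorem \ref{thmMain}, so there is no in-paper argument to compare against. Your proof is correct and is the standard finite-dimensional argument: nearest-point projection when $\qvec 0 \notin \overline{\mathcal C}$, a limiting supporting-hyperplane construction when $\qvec 0$ is a boundary point of $\overline{\mathcal C}$, and induction on the dimension to upgrade the weak separation $\qvec \eta \cdot \qvec x \geq 0$ to the existence of some $\qvec x^\ast \in \mathcal C$ with $\qvec \eta \cdot \qvec x^\ast > 0$ (this last step is exactly where the hypothesis $\qvec 0 \notin \mathcal C$, rather than $\qvec 0 \notin \overline{\mathcal C}$, is essential, and your treatment of it is the right one). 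Two small remarks. First, the statement as printed in the paper is garbled (``with $\qvec \eta \cdot \qvec x^\ast$ for at least $\qvec x^\ast > 0$ for at least one $\qvec x^\ast \in \mathcal C$''); you have correctly reconstructed the intended conclusion, namely $\qvec \eta \cdot \qvec x^\ast > 0$ for at least one $\qvec x^\ast \in \mathcal C$. Second, your justification that $\qvec 0$ is not in the relative interior of $\overline{\mathcal C}$ reads as if it used only $\mathrm{ri}(\overline{\mathcal C}) \subseteq \overline{\mathcal C}$, which is not enough; what you need is the standard identity $\mathrm{ri}(\overline{\mathcal C}) = \mathrm{ri}(\mathcal C) \subseteq \mathcal C$ (or, more simply, $\mathrm{int}(\overline{\mathcal C}) = \mathrm{int}(\mathcal C) \subseteq \mathcal C$, which already guarantees a sequence in $\mathbbm R^n \setminus \overline{\mathcal C}$ converging to $\qvec 0$). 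State that fact explicitly and the argument is complete.
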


\subsection{Stochastic finance}
Here we give a brief introduction to a classical discrete model of mathematical finance \cite{follmer2004stochastic}. We use a one-period model and leave the generalization to multi-period models for future work. At time $t=0$, we have a vector of prices $\overline {\qvec \pi} = (\pi_0,\qvec \pi) = (\pi_0,\dots,\pi_d) \in \mathbbm R^{d+1}_{+}$ of one risk-free asset and $d$ risky assets. We can choose a portfolio $\overline {\qvec \xi} = (\xi_0,\qvec \xi)=(\xi_0,\dots,\xi_d) \in \mathbbm R^{d+1}$. The value of the portfolio is $ \overline {\qvec \pi} \cdot \overline {\qvec \xi}$ at time $t=0$. At time $t=1$, we receive the price vector of the assets $\overline {\qvec S}=(S_0,\qvec S)=(S_0,S_1,\dots,S_d) \in \mathbbm R^{d+1}_{+}$. From the point of view at time $t=0$ each element is a random variable except $S^0$ which is equal to $\pi_0 (1+r)$, where $r$ is the interest rate. For simplicity, the sample space $\Omega$ in our model is finite, i.e. $|\Omega|=K$ and we assume that each element of $\Omega$ happens with strictly positive probability. 
Therefore for the probability measure $\mathbbm P$, we can use $\qvec p$, a probability vector where $p_i = \mathbbm P(w_i)$ for all $w_i \in \Omega$ and $i \in [K]$. The expected value of our portfolio at time $t=1$ is therefore
$$
\mathbbm E^{\mathbbm P}[\overline {\qvec \xi} \cdot \overline {\qvec S}]
= \xi_0 s_0 + \mathbbm E^{\mathbbm P}[\qvec \xi \cdot \qvec S]
= \xi_0\pi_0(1+r)+\sum_{\omega \in \Omega}\mathbbm P(\omega)\sum_{j=1}^n \xi_j S_j(\omega).
$$
A \textit{risk-neutral measure} $\mathbbm P^\ast$ is a probability measure such that
$$
\pi_i = \mathbbm E^{\mathbbm P^\ast}\left [\frac{S_i}{1+r}\right],
$$
for all $i \in [N]_0$. In words, under $\mathbbm P^\ast$, the discounted expected value of the future asset prices corresponds exactly to today's prices.
We recall the definition of arbitrage from \cite{follmer2004stochastic} here.
\begin{definition}[Arbitrage Opportunity \cite{follmer2004stochastic}]
\label{def:arbitrage_opportunity}
A portfolio $\overline {\qvec \xi} \in \RR^{d+1}$ is called an \textit{arbitrage opportunity} if 
\begin{itemize}
\item $\overline {\qvec \xi} \cdot \overline {\qvec \pi} \leq 0$, and
\item for all $\omega \in \Omega$, $\overline {\qvec \xi} \cdot \overline {\qvec S}(\omega) \geq 0$, and
\item there exists at least one $\omega \in \Omega$ such that $\overline {\qvec \xi} \cdot \overline {\qvec S}(\omega) > 0$.
\end{itemize} 
\end{definition}

\section{Arbitrage theory} \label{appArb}

\textbf{Explanation of Figure \ref{figArb}.} Considering Def.~\ref{def:x_arbitrage_opportunity}, Figure \ref{figArb} illustrates the definition of quantum arbitrage, in particular the comparison $\hat{\mathcal H}=\Omega$ and $\hat{\mathcal H}=\mathcal H^\Omega$. All other assumptions are kept the same, i.e., let $(\overline {\qvec \pi}, \overline{\qten S})$ be a price system and $\rho \in \mathcal D(\mathcal H^\Omega)$ be the market density operator. 
Confer Def.~\ref{def:arbitrage_opportunity} for the strictly classical definition. 
Define the base set
\be
B :=  \left \{ \overline {\qvec \xi} \in \RR^{d+1}\ \vert\ \overline {\qvec \xi} \cdot \overline {\qvec \pi} \leq 0 \right \},
\ee
and following sets
\be
C1 &:=& \left \{ \overline {\qvec \xi} \in B\ \vert\ \forall \ket \omega \in  \Omega \text{ with }\bra \omega \rho \ket \omega > 0, \text{ it holds that }\bra \omega \overline {\qvec \xi} \cdot \overline {\qten S} \ket \omega \geq 0 \right \},  \\
C2 &:=& \left \{\overline {\qvec \xi} \in B \ \vert\
\exists \ket \omega \in \Omega \text{ with }\bra \omega \rho \ket \omega > 0\text{ and }\bra \omega \overline {\qvec \xi} \cdot \overline {\qten S}) \ket \omega > 0 \right \}. 
\ee
In addition, from Def.~\ref{def:x_arbitrage_opportunity}, define
\be
Q1 &:=& \left \{ \overline {\qvec \xi}\in B \ \vert\ \forall \ket \psi \in  \hat{\mathcal H} \text{ with }\bra \psi \rho \ket \psi > 0, \text{ it holds that }\bra \psi \overline {\qvec \xi} \cdot \overline {\qten S} \ket \psi \geq 0 \right \},  \\
Q2 &:=& \left \{\overline {\qvec \xi} \in B \ \vert\
\exists \ket \psi \in \hat{\mathcal H} \text{ with }\bra \psi \rho \ket \psi > 0\text{ and }\bra \psi \overline {\qvec \xi} \cdot \overline {\qten S}) \ket \psi > 0 \right \}. 
\ee
The following simple relationships can be deduced
\be
Q1 &\subseteq& C1 \\
C2 &\subseteq& Q2.
\ee
We write the definition of classical and quantum arbitrage as 
\be 
CA &:=&C1 \cap C2 \\
QA &:=&Q1 \cap Q2.
\ee
These sets and their relationships are drawn in Figure \ref{figArb}.
\begin{lemma}[Arbitrage Lemma]\label{lem:x_arbitrage_risky_asset}
The market model admits a quantum arbitrage opportunity if and only if there is a vector $\qvec \xi \in \mathbbm R^d$ such that
\begin{enumerate}
\item for all $\ket \psi \in \hat{ \mathcal H}$ with $\bra \psi \rho \ket \psi > 0$, we have that $ \bra \psi \qvec \xi \cdot \qten S \ket \psi \geq (1+r) \qvec \xi \cdot \qvec \pi$ and 
\item there exists at least one $\ket \psi \in \hat{ \mathcal H}$ with $\bra \psi \rho \ket \psi > 0$, such that $\bra \psi \qvec \xi \cdot \qten S \ket \psi > (1+r) \qvec \xi \cdot \qvec \pi.$
\end{enumerate}
\end{lemma}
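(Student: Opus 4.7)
The plan is to exploit the explicit structure of the risk-free asset: it has today's price $\pi_0 = 1$ and payoff matrix $\qmat S_0 = (1+r)\mathbbm I$, so $\bra\psi \qmat S_0 \ket\psi = 1+r$ for every $\ket\psi$. This means that for any portfolio $\overline{\qvec\xi} = (\xi^0, \qvec\xi) \in \mathbbm R^{d+1}$, the two relevant quantities appearing in Definition \ref{def:x_arbitrage_opportunity} decompose as
\begin{equation}
\overline{\qvec\xi}\cdot\overline{\qvec\pi} = \xi^0 + \qvec\xi\cdot\qvec\pi, \qquad \bra\psi \overline{\qvec\xi}\cdot\overline{\qten S}\ket\psi = \xi^0(1+r) + \bra\psi\qvec\xi\cdot\qten S\ket\psi.
\end{equation}
The lemma is then essentially a rearrangement, eliminating the risk-free weight $\xi^0$.

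For the forward direction, I would start from an arbitrage portfolio $\overline{\qvec\xi}=(\xi^0,\qvec\xi)$. The condition $\overline{\qvec\xi}\cdot\overline{\qvec\pi}\leq 0$ gives $\xi^0 \leq -\qvec\xi\cdot\qvec\pi$, hence $\xi^0(1+r) \leq -(1+r)\qvec\xi\cdot\qvec\pi$, i.e.\ $-\xi^0(1+r) \geq (1+r)\qvec\xi\cdot\qvec\pi$. Substituting into $\bra\psi\overline{\qvec\xi}\cdot\overline{\qten S}\ket\psi \geq 0$ for every allowed $\ket\psi$ with $\bra\psi\rho\ket\psi>0$ yields $\bra\psi\qvec\xi\cdot\qten S\ket\psi \geq -\xi^0(1+r) \geq (1+r)\qvec\xi\cdot\qvec\pi$, which is condition 1. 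The strict inequality at one such $\ket\psi$ transfers in the same way, giving condition 2; the only thing to check carefully is that the chain of inequalities preserves strictness at the right step, which it does since $\xi^0$ is fixed throughout.

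For the reverse direction, given $\qvec\xi\in\mathbbm R^d$ satisfying conditions 1 and 2, I would \emph{construct} an arbitrage portfolio explicitly by setting $\xi^0 := -\qvec\xi\cdot\qvec\pi$ and $\overline{\qvec\xi}:=(\xi^0,\qvec\xi)$. Then $\overline{\qvec\xi}\cdot\overline{\qvec\pi} = 0 \leq 0$ by construction, and for any $\ket\psi\in\hat{\mathcal H}$ with $\bra\psi\rho\ket\psi>0$,
\begin{equation}
\bra\psi\overline{\qvec\xi}\cdot\overline{\qten S}\ket\psi = -(1+r)\qvec\xi\cdot\qvec\pi + \bra\psi\qvec\xi\cdot\qten S\ket\psi \geq 0
\end{equation}
by condition 1, with strict inequality at the distinguished $\ket\psi$ from condition 2. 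Thus $\overline{\qvec\xi}$ is a quantum arbitrage opportunity.

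There is no genuine mathematical obstacle: both directions reduce to linear rearrangements that crucially use $\qmat S_0 = (1+r)\mathbbm I$. The only point requiring care is the bookkeeping of strict versus weak inequalities across both directions, and noting that the requirement $\bra\psi\rho\ket\psi>0$ (rather than $\ket\psi$ ranging over all of $\hat{\mathcal H}$) is preserved verbatim on both sides, since the risk-free contribution does not depend on $\ket\psi$ and therefore plays no role in restricting the witness set.
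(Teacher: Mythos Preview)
Your proof is correct and follows essentially the same route as the paper: both directions hinge on the decomposition $\bra\psi\overline{\qvec\xi}\cdot\overline{\qten S}\ket\psi = \xi^0(1+r) + \bra\psi\qvec\xi\cdot\qten S\ket\psi$, the forward direction uses $\xi^0 \leq -\qvec\xi\cdot\qvec\pi$ to push the inequalities through, and the reverse direction constructs the arbitrage portfolio by setting $\xi^0 = -\qvec\xi\cdot\qvec\pi/\pi_0$ (you take $\pi_0=1$, as fixed in the main setting). The handling of strict versus weak inequalities matches the paper's argument as well.
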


\begin{proof}
$("\Longrightarrow")$. Let $\overline {\qvec \xi}$ be the arbitrage opportunity hence $0 \geq \overline{\qvec \xi} \cdot \overline {\qvec \pi}=\xi_0 \pi_0+\qvec \xi \cdot \qvec \pi$. For all $\ket \psi \in \hat{ \mathcal H}$, we have that $\bra \psi \qvec \xi \cdot \qten S \ket \psi - (1+r) \qvec \xi \cdot \qvec \pi \geq  \bra \psi \qvec \xi \cdot \qten S \ket \psi + (1+r)\xi_0\pi_0 = \bra \psi \qvec \xi \cdot \qten S \ket \psi + (1+r)\bra \psi \xi_0 \pi_0 \ket \psi = \bra \psi \overline {\qvec \xi} \cdot \overline {\qten S} \ket \psi$. From Def. \ref{def:x_arbitrage_opportunity}, we have that $\bra \psi \overline {\qvec \xi} \cdot \overline {\qten S} \ket \psi \geq 0$ for all $\ket \psi \in \hat{ \mathcal H}$ with  $\bra \psi \rho \ket \psi > 0$, and $\bra \psi \overline {\qvec \xi} \cdot \overline {\qten S} \ket \psi>0$ for at least one $\ket \psi \in \hat{ \mathcal H}$ with $\bra \psi \rho \ket \psi > 0$.  Moving the $(1+r)\qvec \xi \cdot \qvec \pi$ to the right-hand side, we obtain the two conditions stated in the lemma.\\
$("\Longleftarrow")$. Let $\qvec \xi$ be the portfolio satisfying the two conditions stated in the lemma. We construct a portfolio $(\xi_0, \qvec \xi)$ with $\xi_0=-\qvec \xi \cdot \qvec \pi / \pi_0$. Then $\overline {\qvec \xi} \cdot \overline {\qvec \pi}=\xi_0\pi_0+\qvec \xi \cdot \qvec \pi = 0$. Moreover, for any $\ket \psi \in \hat{ \mathcal H}$, we have that $\bra \psi \overline {\qvec \xi} \cdot \overline {\qten S} \ket \psi = -(1+r)\qvec \xi \cdot \qvec \pi + \bra \psi \qvec \xi \cdot \qten S \ket \psi$, which is non-negative for all $\ket \psi \in \hat{ \mathcal H}$ with $\bra \psi \rho \ket \psi > 0$ and strictly positive for some $\ket \psi \in \hat{ \mathcal H}$ with $\bra \psi \rho \ket \psi > 0$. 
By Def. \ref{def:x_arbitrage_opportunity}, the portfolio $(\xi_0, \qvec \xi)$ is an arbitrage opportunity.
\end{proof}

\begin{corollary}[Absence of arbitrage opportunity]\label{cor:x_arbitrage_discounted_net_gains}
The market is absent of arbitrary opportunity if and only if for all $\qvec \xi \in \mathbbm R^d$, the fact that $\bra \psi \qvec \xi \cdot \qten Y \ket \psi \geq 0 $ for all $\ket \psi \in \hat{ \mathcal H}$ with $\bra \psi \rho \ket \psi > 0$ implies that $\bra \psi \qvec \xi \cdot \qten Y \ket \psi = 0 $ for all $\ket \psi \in \mathcal H^\Omega$ with $\bra \psi \rho \ket \psi > 0$.
\end{corollary}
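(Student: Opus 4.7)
The plan is to reduce both directions to the Arbitrage Lemma (Lemma \ref{lem:x_arbitrage_risky_asset}) via the change of variable $\qmat Y_i := \qmat S_i/(1+r) - \pi_i \mathbbm I$. For any unit vector $\ket \psi$, the identity $\bra \psi \qvec \xi \cdot \qten Y \ket \psi = \tfrac{1}{1+r}\bra \psi \qvec \xi \cdot \qten S \ket \psi - \qvec \xi \cdot \qvec \pi$ turns the inequality $\bra \psi \qvec \xi \cdot \qten Y \ket \psi \geq 0$ into $\bra \psi \qvec \xi \cdot \qten S \ket \psi \geq (1+r)\qvec \xi \cdot \qvec \pi$, which is exactly the condition appearing in the lemma, and the same for the corresponding strict inequality. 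I read the corollary under the convention $\hat{\mathcal H} = \mathcal H^\Omega$, in line with the paper's stated focus after the arbitrage examples; otherwise the premise (restricted to $\hat{\mathcal H}$) is strictly weaker than the conclusion (over all of $\mathcal H^\Omega$) and the equivalence cannot hold without additional assumptions.

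For the ($\Rightarrow$) direction, assume no arbitrage and let $\qvec \xi \in \mathbbm R^d$ satisfy $\bra \psi \qvec \xi \cdot \qten Y \ket \psi \geq 0$ for all $\ket \psi$ with $\bra \psi \rho \ket \psi > 0$. I proceed by contradiction: if some $\ket \phi$ on the support of $\rho$ gave strict inequality, the translation above verifies both hypotheses of Lemma \ref{lem:x_arbitrage_risky_asset}, producing an arbitrage opportunity and contradicting the assumption. Hence the value must vanish throughout the support of $\rho$.

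For the ($\Leftarrow$) direction, assume the stated implication and suppose for contradiction that the market admits arbitrage. By Lemma \ref{lem:x_arbitrage_risky_asset} there exists $\qvec \xi \in \mathbbm R^d$ whose discounted net gains satisfy $\bra \psi \qvec \xi \cdot \qten Y \ket \psi \geq 0$ on the support of $\rho$ and $\bra \phi \qvec \xi \cdot \qten Y \ket \phi > 0$ for some such $\ket \phi$. The assumed implication would force $\bra \phi \qvec \xi \cdot \qten Y \ket \phi = 0$, contradicting the strict positivity. No such $\qvec \xi$ can exist, so the market is arbitrage-free.

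The only genuine obstacle is resolving the $\hat{\mathcal H}$ versus $\mathcal H^\Omega$ asymmetry in the statement, which I handle by adopting the paper's standing convention $\hat{\mathcal H} = \mathcal H^\Omega$; once that is fixed, each direction is essentially a one-line restatement of Lemma \ref{lem:x_arbitrage_risky_asset} after the $\qten S \leftrightarrow \qten Y$ substitution.
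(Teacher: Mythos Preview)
Your proof is correct and follows essentially the same approach as the paper: both reduce directly to Lemma~\ref{lem:x_arbitrage_risky_asset} via the substitution $\qmat Y_i = \qmat S_i/(1+r) - \pi_i \mathbbm I$, so that the two conditions of the lemma become precisely ``$\bra \psi \qvec \xi \cdot \qten Y \ket \psi \geq 0$'' and ``strict inequality somewhere''. Your treatment is in fact more explicit than the paper's one-line argument---you handle both directions separately and correctly flag the $\hat{\mathcal H}$ versus $\mathcal H^\Omega$ asymmetry in the statement, which the paper's own proof does not address.
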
  
\begin{proof}
From Lemma \ref{lem:x_arbitrage_risky_asset}, no arbitrage means that there exists no $\ket \psi \in \hat{ \mathcal H}$ with $\bra \psi \rho \ket \psi > 0$, such that the statement of Point 2. holds. Hence the net gains have to be zero. 
\end{proof}

\section{Quantum measure change} \label{secRN}

Our stochastic model for the financial market is described by a density matrix encoding the probabilities for the states in $\mathcal H^\Omega$. 
An important notion in financial theory is the change of measure from one probability measure to another. In this section, we define the necessary mathematical terminology for density operators and show a measure change (Radon-Nikodym derivative) that performs correctly in the context of this work.
First, we would like to mimic the notion of absolute continuity in measure theory, and we define it here for density operators.

\begin{definition}[Absolute continuity]\label{def:absolutely-continuity}
Suppose $\rho$ and $\sigma$ are two density operators on $\mathcal H^\Omega$. Density operator $\sigma$ is said to be absolutely continuous with respect to $\rho$ on $\mathcal H^\Omega$, and we write $\sigma \ll \rho$, if for all $\ket \psi \in \mathcal H^\Omega$
$$
\rho \ket \psi = \qvec 0 \implies \sigma \ket \psi = \qvec 0.
$$
In other words, for the nullspaces, it holds that ${\rm null}(\rho) \subseteq {\rm null}(\sigma)$.
\end{definition}
We note that for all $\ket \psi \in \mathcal H^\Omega$, we have that $\rho \ket \psi = \qvec 0 \iff \bra \psi \rho \ket \psi =0$ for positive semi-definite operators. For this statement, the  direction $\impliedby$ is proved by the standard property $\vert \bra {\psi_\perp} \rho \ket \psi \vert \leq \sqrt{\bra {\psi_\perp} \rho \ket {\psi_\perp} \bra {\psi} \rho \ket \psi}$ for orthogonal states $\ket \psi$ and $\ket {\psi_\perp}$.
Next, we specialize the definition of \textit{equivalence} of probability measures to the same notion in the context of density operators.
\begin{definition}[Null-space equivalence between density operators]\label{def:equivalence-between-density-matrices}
Given two density operators on $\mathcal H^\Omega$, if both $\sigma \ll \rho$ and $\rho \ll \sigma$ hold, we say that $\rho$ and $\sigma$ are \textit{equivalent}, and we write $\rho \approx \sigma$.
\end{definition}

As it will appear in the main theorem of the present work, we define the \textit{quantum Radon-Nikodym derivative} to describe the transformation between density operators.

\begin{definition}[Density operator Radon-Nikodym derivative]\label{def:density_operator_derivative}
Given two density operators $\rho, \sigma \in \mathbbm C^{K \times K}$ with $\sigma \ll \rho$, a quantum map $\varphi : \mathbbm C^{K \times K} \to  \mathbbm C^{K \times K}$ is called \textit{density operator Radon-Nikodym derivative} from $\rho$ to $\sigma$, if $\varphi [\cdot] = \sigma^{1/2} \rho^{-1/2} [\cdot] \rho^{-1/2} \sigma^{1/2}$, where $\rho^{-1}$ is the pseudo-inverse of $\rho$ and $\sigma^{1/2}$ applies the square-root to the eigenvalues of $\sigma$.
We also denote $\varphi(\sigma, \rho)[\cdot]$ to clarify the parameters of the superoperator. 
\end{definition}
Note that $\varphi(\sigma, \rho)[\rho] = \sigma^{1/2} \rho^{-1/2} \rho \rho^{-1/2} \sigma^{1/2} = \sigma^{1/2}\Pi_{+}(\rho) \sigma^{1/2} = \sigma^{1/2}(\mathbbm 1 - \Pi_{\rm null}(\rho)) \sigma^{1/2} = \sigma$.
Here, $\Pi_{+}(\rho)$ is the projector into the row-space of $\rho$ and $\Pi_{\rm null}(\rho)$ is the projector into the null-space of $\rho$. We have used that $\sigma^{1/2} \Pi_{\rm null}(\rho) \sigma^{1/2}  = 0$ because of $\sigma \ll \rho$.
\begin{lemma}            
The density operator Radon-Nikodym derivative satisfies some usual properties of a measure change.           
\end{lemma}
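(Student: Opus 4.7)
The plan is to read the lemma as asserting a short catalogue of standard measure-change properties for the superoperator $\varphi(\sigma,\rho)[X] = \sigma^{1/2}\rho^{-1/2} X \rho^{-1/2}\sigma^{1/2}$, each of which should reduce to a direct calculation from the explicit formula. The properties I would expect the authors to collect are: (i) the \emph{forward action} $\varphi(\sigma,\rho)[\rho]=\sigma$ whenever $\sigma\ll\rho$, with the \emph{identity case} $\varphi(\rho,\rho)[\rho]=\rho$ as a specialisation; (ii) \emph{complete positivity} of $\varphi(\sigma,\rho)$; (iii) a \emph{chain rule on states}, $\varphi(\tau,\sigma)[\varphi(\sigma,\rho)[\rho]]=\varphi(\tau,\rho)[\rho]$, whenever $\tau\ll\sigma\ll\rho$; (iv) \emph{invertibility under equivalence}: if $\rho\approx\sigma$ then $\varphi(\rho,\sigma)[\varphi(\sigma,\rho)[\rho]]=\rho$; and (v) an \emph{expectation-transfer} identity ${\rm tr}(\sigma A)={\rm tr}(\rho\,\varphi^{\ast}(\sigma,\rho)[A])$, where $\varphi^{\ast}(\sigma,\rho)[A]=\rho^{-1/2}\sigma^{1/2} A \sigma^{1/2}\rho^{-1/2}$ is the Heisenberg-picture adjoint used to transfer an observable's expectation between $\sigma$ and $\rho$.

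Property (i) is already carried out in the paragraph just above the lemma: $\sigma^{1/2}\rho^{-1/2}\rho\,\rho^{-1/2}\sigma^{1/2}=\sigma^{1/2}\Pi_{+}(\rho)\sigma^{1/2}=\sigma$, where the last equality uses $\sigma^{1/2}\Pi_{\rm null}(\rho)=0$, which follows from $\sigma\ll\rho$ via Definition~\ref{def:absolutely-continuity}; specialising to $\sigma=\rho$ yields the identity case. For (ii) I would observe that $\varphi(\sigma,\rho)$ has the single-Kraus form $X\mapsto K X K^{\dagger}$ with $K=\sigma^{1/2}\rho^{-1/2}$, so complete positivity is immediate. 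Properties (iii) and (iv) then fall out by two applications of (i): $\varphi(\tau,\sigma)[\varphi(\sigma,\rho)[\rho]]=\varphi(\tau,\sigma)[\sigma]=\tau=\varphi(\tau,\rho)[\rho]$, and likewise $\varphi(\rho,\sigma)[\sigma]=\rho$ when the reverse absolute continuity $\rho\ll\sigma$ is also available. Finally, (v) is just cyclicity of the trace applied to the Kraus factorisation: ${\rm tr}(\varphi(\sigma,\rho)[\rho]\,A)={\rm tr}(\sigma A)$ can be rewritten as ${\rm tr}(\rho\,\rho^{-1/2}\sigma^{1/2}A\sigma^{1/2}\rho^{-1/2})$, giving the desired dual identity.

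The main obstacle is interpretive rather than technical: the lemma says only that ``some usual properties'' hold, so the bulk of the work is selecting an appropriate list and organising the verifications in a natural order. A secondary subtlety is that $\varphi(\sigma,\rho)$ is not trace preserving on general inputs---it is in general subnormalised off the support of $\rho$---so I would formulate the chain rule and invertibility as identities evaluated \emph{at $\rho$} (or, more generally, on states supported in the range of $\rho$) rather than as equalities of superoperators on all of $\mathbbm C^{K\times K}$. Making this scope explicit when stating each property, and flagging where the pseudo-inverse convention and the absolute-continuity hypothesis are each used, is what I expect to take the most care over in the write-up.
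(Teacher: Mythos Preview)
Your catalogue of properties is reasonable and the verifications you sketch are sound, but it diverges from the paper's list in two noticeable ways.

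First, the selection of properties differs. The paper's item (i) is a \emph{linearity} statement in the first argument, $\varphi(\rho+\sigma,\tau)[\tau]=\varphi(\rho,\tau)[\tau]+\varphi(\sigma,\tau)[\tau]$ under $\rho,\sigma\ll\tau$, which you do not mention; conversely, your complete-positivity observation (single-Kraus form) is not in the paper. Your forward-action item is, as you note, already handled in the text before the lemma, so the paper does not repeat it.

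Second, and more substantively, you deliberately restrict the chain rule and the invertibility to identities \emph{evaluated at $\rho$}, arguing that superoperator-level equalities could fail off the support. The paper takes the opposite stance: it proves the chain rule as a genuine superoperator identity $\varphi(\rho,\tau)[\cdot]=\varphi(\rho,\sigma)[\varphi(\sigma,\tau)[\cdot]]$ under $\rho\ll\sigma\ll\tau$, by inserting $\sigma^{-1/2}\sigma^{1/2}=\Pi_{+}(\sigma)$ and using $\rho^{1/2}\Pi_{+}(\sigma)=\rho^{1/2}$ from $\rho\ll\sigma$. Likewise, for invertibility under $\rho\approx\sigma$ the paper computes $\varphi(\rho,\sigma)[\varphi(\sigma,\rho)[\cdot]]=\Pi_{+}(\rho)[\cdot]\Pi_{+}(\rho)$ as a superoperator statement, which then specialises to your state-level identity on inputs $\tau\ll\rho$. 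So your caution is not wrong, but it is more conservative than needed: the absolute-continuity hypotheses are precisely what make the projector insertions harmless, and the paper exploits this to get the stronger formulations. Your expectation-transfer item (v) matches the paper's item (iv) essentially verbatim.
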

\begin{proof}
Let $\rho$, $\sigma$, and $\tau$ be density operators.  
i) If $\rho \ll \tau$ and $\sigma \ll \tau$, then we have linearity in the first parameter given that the argument is the same as the second parameter, by showing that
\be
\varphi(\rho + \sigma, \tau)[\tau] &=& 
(\rho + \sigma)^{1/2} \tau^{-1/2} \tau \tau^{-1/2} (\rho + \sigma)^{1/2}\\
&=&\rho + \sigma \\ 
&=& \rho^{1/2} \tau^{-1/2} \tau \tau^{-1/2} \rho^{1/2} + \sigma^{1/2} \tau^{-1/2} \tau \tau^{-1/2} \sigma^{1/2}\\
&=&
\varphi(\rho, \tau)[\tau] + \varphi(\sigma, \tau)[\tau].
\ee
ii) Let $\rho \ll \sigma \ll \tau$, then
\be
\varphi(\rho, \tau)[\cdot] &=&\rho^{1/2} \tau^{-1/2} [\cdot] \tau^{-1/2} \rho^{1/2} \\ 
&=&\rho^{1/2} \sigma^{-1/2} \sigma^{1/2} \tau^{-1/2} [\cdot] \tau^{-1/2} \sigma^{-1/2} \sigma^{1/2} \rho^{1/2} \\
&=&
\varphi(\rho, \sigma)[\varphi(\sigma, \tau)[\cdot]].
\ee
iii) Let $\rho \approx \sigma$, then 
\be
\varphi(\rho, \sigma)[\varphi(\sigma, \rho)[\cdot]] &=& \rho^{1/2}\sigma^{-1/2} \sigma^{1/2} \rho^{-1/2} [\cdot] \rho^{-1/2} \sigma^{1/2}\sigma^{-1/2}\rho^{1/2}\\
&=& \Pi_{+}(\rho) [\cdot] \Pi_{+}(\rho).
\ee
The expression evaluates to $\varphi(\rho, \sigma)[\varphi(\sigma, \rho)[\tau]] = \tau$ for $\tau \ll \rho$
and motivates the definition of the inverse $\varphi^{-1}(\sigma, \rho) := \varphi(\rho, \sigma)$ of $\varphi(\sigma,\rho)$.
\\
iv) Let $X \in \mathbbm C^{K\times K}$ and $\sigma \ll \rho$. Then
\be
{\rm tr}\left \{ \sigma X \right\} = {\rm tr}\left \{ \varphi(\sigma,\rho)[\rho] X \right\} = {\rm tr}\left \{ \rho Y\right \},
\ee
with $Y:= \rho^{-1/2} \sigma^{1/2} X \sigma^{1/2}\rho^{-1/2} \equiv \varphi (\rho^{-1},\sigma^{-1})[X]$.
\end{proof}
In the context of the present work, the definition of the Radon-Nikodym derivative superoperator operates as expected.
We note Refs.~\cite{Belavkin1986,Raginsky2003} for a mathematical treatment of the Radon-Nikodym derivative in the non-commutative setting.

\section{Proof of both directions of Theorem \ref{thmMain}} \label{appProofMain}

\begin{proof}[Proof of the direction
$("\Longleftarrow")$ of Theorem \ref{thmMain}]
This direction shows that the existence of a risk-neutral density operator implies the absence of arbitrage.
Suppose there exists a risk-neutral density operator $\rho^\ast \in \mathcal P$. With the original density operator $\rho$, take a portfolio $\overline {\qvec \xi} \in \RR^{d+1}$ which satisfies the two conditions from the definition of quantum arbitrage Def.~\ref{def:x_arbitrage_opportunity}: 
(1) For all $\ket \psi \in \mathcal H^\Omega$ with $\bra \psi \rho \ket \psi>0$, we have that $\bra \psi \overline {\qvec \xi} \cdot \overline {\qten S} \ket \psi \geq 0$ and (2) ${\rm tr} \left\{ \rho\ \overline{\qvec \xi} \cdot \overline {\qten S} \right \}>0$. 
Condition (1) is true also for density operator $\rho^\ast$ because of the set identity $\left\{ \ket \psi \in \mathcal H^\Omega : \bra \psi \rho \ket \psi>0 \right\} =\left\{ \ket \psi \in \mathcal H^\Omega : \bra \psi \rho^\ast \ket \psi>0\right \}$ for $\rho \approx \rho^\ast$. 
From condition (2), there exist some $\ket\psi \in \mathcal H^\Omega $ for which $\bra \psi \rho \ket \psi>0$ and $\bra \psi \overline{\qvec \xi} \cdot \overline {\qten S}\ket \psi >0$.
For the same $\ket\psi$ we have that $\bra \psi \rho^\ast \ket \psi>0$ because of $\rho \approx \rho^\ast$,
Hence, condition (2) is true also for density operator $\rho^\ast$, because of 
${\rm tr}\left\{ \rho^\ast \overline {\qvec \xi} \cdot \overline {\qten S}\right\} > \bra \psi \rho^\ast \ket \psi \bra {\psi}\overline {\qvec \xi} \cdot \overline {\qten S} \ket{\psi}>0$. Hence,
\be
\overline {\qvec \xi} \cdot \overline {\qvec \pi} = \sum_{i=0}^d\xi_i \cdot \pi_i = \sum_{i=0}^{d}{\rm tr} \left\{ \rho^\ast \frac{\xi_i \cdot \mathcal S_i}{1+r}\right\}= {\rm tr} \left\{ \rho^\ast \frac{\overline {\qvec \xi} \cdot \overline {\qten S}}{1+r} \right\} > 0,
\ee
where for the second equation we use the definition in Eq.~(\ref{eqMartingaleDensityOperator}). Thus $\overline {\qvec \xi}$ cannot be an arbitrage opportunity.
\end{proof}

\begin{proof}[Proof of the direction
$("\Longrightarrow")$ of Theorem \ref{thmMain}]
Consider again the discounted net gains $\qmat Y_i \coloneqq \frac{\qmat S_i}{1+r}-\pi_i \mathbbm I$, for $i\in[d]_0$. As $\rho$ is a density operator, we have that ${\rm tr}\left\{\rho\vert \qmat Y_i \vert \right \} \leq \frac{1}{1+r}{\rm tr}\left\{\rho \qmat S_i \right\} + \pi_i$, which is finite by assumption. Let $\mathcal Q$ be the set of all density operators which are equivalent to $\rho$, i.e., 
\be
\mathcal Q := \{ \sigma : \sigma\ \text{density operator with}\ \sigma\approx\rho \}.
\ee
We prove that $\mathcal Q$ is convex. Let $\sigma_a := a\sigma_0+(1-a)\sigma_1$, where $\sigma_0,\sigma_1 \in \mathcal Q$ and $0 \leq a \leq 1$.
We have that $\sigma_a$ is a density operator because it is a convex combination of density operators. 
For proving equivalence, we use that for all $\ket \psi \in \mathcal H^\Omega$,
\be
\bra \psi \sigma_a \ket \psi = 0 \iff \bra \psi \sigma_0 \ket \psi =0\ \mathrm {and}\ \bra \psi \sigma_1 \ket \psi = 0.
\ee
So, $\sigma_a \in \mathcal Q$ and the set $\mathcal Q$ is a convex set. 
Let the set of expected net gain vectors be
\be
\mathcal C :=\{{\rm tr}\{\sigma \qten Y \}:\sigma \in \mathcal Q\} \subseteq \mathbbm R^{d}.
\ee
To prove convexity of $\mathcal C$, take $\qvec e_0,\qvec e_1 \in \mathcal C$. Let $\sigma_0$ and $\sigma_1$ be the associated density operators, and $\sigma_a$ be the convex combination as above.
By convexity of $\mathcal Q$, we have that $\sigma_a \in \mathcal Q$ and hence $a  {\rm tr}\{\sigma_0 \qten Y\}+(1-a)  {\rm tr}\{\sigma_1 \qten Y\} = {\rm tr}\{\sigma_a  \qten Y \} \in \mathcal C$. 
Now, we need to prove $\mathcal C$ contains the origin. 
Assume by contradiction, $\qvec 0 \notin \mathcal C$, then by Theorem \ref{thm:seperating_hyperplane}, we have a vector $\qvec \xi \in \RR^d$ such that for all $\qvec x \in \mathcal C$, we have that $\qvec \xi \cdot \qvec x \geq 0$ and for some $\qvec x^\ast \in \mathcal C$, we have that $\qvec \xi \cdot \qvec x^\ast > 0$. 
Hence, this $\qvec \xi$ satisfies 
\be
&\text{(i)}& {\rm tr}\{ \sigma \qvec \xi \cdot \qten Y \} \geq 0\ \text{for all}\ \sigma \in \mathcal Q \\ 
&\text{(ii)}& {\rm tr} \{ \sigma_0 \qvec \xi \cdot \qten Y\} > 0\ \text{for some}\ \sigma_0 \in \mathcal Q. 
\ee
Condition (ii) implies that there exists $\ket \psi \in \mathcal H^\Omega$ with $\bra \psi \sigma_0 \ket \psi >0$ and $\bra \psi \qvec \xi \cdot \qten Y \ket \psi >0 $, see Lemma \ref{lemStriclyGreater}, and for this $\ket \psi$ we also have $\bra \psi \rho \ket \psi >0$ because of $\sigma_0 \approx \rho$.
Regarding condition i), we now prove that it implies that $\bra \psi \qvec \xi \cdot \qten Y \ket \psi \geq 0$ for all $\ket \psi \in \mathcal H^\Omega$ with $\bra \psi \rho \ket \psi >0$. 
This contradicts our assumption of the absence of arbitrage, which by Corollary \ref{cor:x_arbitrage_discounted_net_gains} implies that there must also be strictly negative outcomes.
Let 
\be 
A := \left \{ \ket \omega \in \Omega :  \bra \omega \qvec \xi \cdot \qten Y \ket \omega <0 \right\}.
\ee
and its complement on $\Omega$ be $A^c$. Note that only computational basis states $\ket \omega$ define the set $A$. Define the projector 
\be
\Pi_A = \sum_{\ket \omega \in A} \ket \omega \bra \omega.
\ee
For integers $n\geq 1$, define the quantum map 
\be
F_n[\cdot] := \left(1-\frac{1}{n}\right) \Pi_A [\cdot]  \Pi_A  +\frac{1}{n}\Pi_{A^c} [\cdot]  \Pi_{A^c},
\ee
and its $\rho$-dependent normalization $f_n := {\rm tr} \left\{ F_n [\rho]\right\}$. 
Note that for all $\ket \psi \in \mathcal H^\Omega$ and density operators $\tau$, we have that
\be 
0 < \bra \psi F_n [\tau] \ket \psi \leq \bra \psi F_n [\mathbbm 1] \ket \psi \leq 1.
\ee
Define density operators $\sigma_n$ implicitly via the superoperator
\be
\varphi(\sigma_n,\rho)[\cdot] = \frac{F_n[\cdot]}{f_n}.
\ee
We show that $\sigma_n \approx \rho$ by 
\be
\bra \psi \sigma_n \ket \psi = 0 &\iff& \bra \psi \varphi(\sigma_n,\rho)[\rho] \ket \psi = 0 \\
&\iff& \bra \psi \left(1-\frac{1}{n}\right) \Pi_A \rho  \Pi_A  +\frac{1}{n}\Pi_{A^c} \rho \Pi_{A^c}\ket \psi =0 
\\
&\iff& \bra \psi \Pi_A \rho  \Pi_A\ket \psi =0\ {\rm and}\ \bra \psi \Pi_{A^c} \rho \Pi_{A^c}\ket \psi =0
\\
&\iff&\bra \psi \rho \ket \psi =0.
\ee
Hence, $\sigma_n \in \mathcal Q$. 
From condition (i), we know that
\be \label{eqGainLowerB}
0 \leq \qvec \xi \cdot {\rm tr}\{ \sigma_n \qten Y\} &=& \frac{1}{f_n}  {\rm tr}\{ \varphi(\sigma_n,\rho)[\rho]\ \qvec\xi \cdot \qten Y\}\\
&=& \frac{1}{f_n}  {\rm tr}\left \{ \left[\left(1-\frac{1}{n}\right) \Pi_A \rho  \Pi_A  +\frac{1}{n}\Pi_{A^c} \rho \Pi_{A^c}\right] \qvec\xi \cdot \qten Y \right\}\\
&=& \frac{1}{f_n}  {\rm tr}\left \{ \left[\left(1-\frac{1}{n}\right) \Pi_A \qvec\xi \cdot \qten Y   \Pi_A  +\frac{1}{n}\Pi_{A^c} \qvec\xi \cdot \qten Y  \Pi_{A^c}\right] \rho \right\}.
\ee
Taking limits results in 
\be
\lim_{n \to \infty} f_n = {\rm tr} \left\{ \Pi_A \rho \Pi_A\right\} = {\rm tr} \left\{ \Pi_A \rho\right\},
\ee
and
\be
0 \leq \lim_{n \to \infty} {\rm tr}\{ \varphi(\sigma_n,\rho)[\rho]\ \qvec\xi \cdot \qten Y\} = {\rm tr}\left \{ \Pi_A \qvec\xi \cdot \qten Y \Pi_A \rho\right\}.
\ee
We can show that $\Pi_A \qvec\xi \cdot \qten Y \Pi_A$ is a negative semi-definite operator via
\be
\bra \chi \Pi_A \qvec\xi \cdot \qten Y \Pi_A \ket \chi &=& \bra \chi \sum_{\ket \omega \in A} \ket \omega \bra \omega \qvec\xi \cdot \qten Y \sum_{\ket {\omega'} \in A} \ket {\omega'} \bra {\omega'}  \ket{ \chi}\\
&=& \bra {\chi_A} \qvec\xi \cdot \qten Y   \ket{ \chi_A}\leq 0.
\ee
by the definition of $A$.
Hence it follows from Eq.~\ref{eqGainLowerB} that
\be
{\rm tr}\left \{ \Pi_A \qvec\xi \cdot \qten Y \Pi_A \rho\right\} =0.
\ee
Hence, $\qvec\xi \cdot \qten Y$ is a positive semi-definite matrix and thus
for all $\ket \psi \in \mathcal H^\Omega$, $\bra \psi \qvec \xi \cdot \qten Y \ket \psi \geq 0$.
\end{proof}

\begin{lemma} \label{lemStriclyGreater}
Let $\rho$ be a density operator and $X$ be some matrix. 
Then ${\rm tr}\{ \rho X \}>0$ implies that there exists a $\ket \psi$ such that $\bra\psi \rho \ket \psi>0$ and $\bra\psi X \ket \psi>0$.
\end{lemma}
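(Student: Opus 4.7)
The plan is to exploit the spectral decomposition of the density operator $\rho$ and apply a pigeonhole-style argument. Since $\rho$ is a positive semi-definite Hermitian operator with unit trace, it admits an orthonormal eigenbasis $\{\ket{\phi_i}\}_{i=1}^K$ of $\mathcal H^\Omega$ with real non-negative eigenvalues $\{\lambda_i\}_{i=1}^K$, i.e., $\rho = \sum_i \lambda_i \ket{\phi_i}\bra{\phi_i}$. Computing the trace in this basis yields ${\rm tr}\{\rho X\} = \sum_i \lambda_i \bra{\phi_i} X \ket{\phi_i}$. In the use case appearing in the main theorem, $X = \qvec\xi \cdot \qten Y$ is a real linear combination of Hermitian matrices and hence Hermitian, so each diagonal entry $\bra{\phi_i} X \ket{\phi_i}$ is real; more generally, since ${\rm tr}\{\rho X\}$ is assumed to be a strictly positive real number, one may replace $X$ by its Hermitian part $(X+X^\dagger)/2$ without changing either hypothesis or conclusion.

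Next, I would argue by contradiction. Suppose that for every index $i$ with $\lambda_i > 0$ the diagonal entry satisfies $\bra{\phi_i} X \ket{\phi_i} \leq 0$. Then the sum
\begin{equation*}
{\rm tr}\{\rho X\} = \sum_{i:\lambda_i > 0} \lambda_i \bra{\phi_i} X \ket{\phi_i} + \sum_{i:\lambda_i = 0} 0
\end{equation*}
would be a sum of non-positive terms, hence $\leq 0$. This contradicts the hypothesis ${\rm tr}\{\rho X\} > 0$. Consequently there must exist an index $i_0$ for which both $\lambda_{i_0} > 0$ and $\bra{\phi_{i_0}} X \ket{\phi_{i_0}} > 0$ hold strictly.

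Finally, I would set $\ket\psi := \ket{\phi_{i_0}}$. Orthonormality of the eigenbasis gives $\bra\psi \rho \ket\psi = \lambda_{i_0} > 0$, while the choice of $i_0$ yields $\bra\psi X \ket\psi > 0$, which is exactly the conclusion of the lemma. The argument is essentially a quantum analog of the classical fact that a strictly positive weighted sum with non-negative weights must have at least one summand with a positive weight and positive integrand, so I do not anticipate any substantial obstacle; the only minor subtlety is ensuring that the diagonal entries $\bra{\phi_i} X \ket{\phi_i}$ are real, which is handled either by the Hermiticity of $X$ in the intended application or by the reduction to its Hermitian part described above.
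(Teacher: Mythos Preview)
Your proof is correct and follows essentially the same approach as the paper: both use the spectral decomposition $\rho=\sum_{\lambda_j>0}\lambda_j\ket{u_j}\bra{u_j}$, expand ${\rm tr}\{\rho X\}=\sum_{\lambda_j>0}\lambda_j\bra{u_j}X\ket{u_j}$, and conclude by pigeonhole that some eigenvector $\ket{u_j}$ with $\lambda_j>0$ must satisfy $\bra{u_j}X\ket{u_j}>0$. Your contradiction framing and the remark on Hermiticity are slight elaborations, but the argument is the same.
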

\begin{proof}
Let $\rho = \sum_{\lambda_j >0} \lambda_j \ket {u_j}\bra{u_j}$.
Then ${\rm tr}\{ \rho X \} =  \sum_{\lambda_j >0} \lambda_j \bra {u_j} X \ket{u_j}>0$. Hence, there exists a $\ket {u_j}$ for which $\bra{u_j} \rho \ket{u_j}=\lambda_j >0$ and $\bra{u_j} X \ket {u_j}>0$.
\end{proof}

\end{document}